\newcounter{resultnum}[section]
\newcounter{conclusionnum}[section]
\newcounter{conditionnum}[section]
\newcounter{conjecturenum}[section]
\newcounter{examplenum}[section]
\newcounter{exercisenum}[section]
\newtheorem{lemma}{Lemma}[section]
\newcounter{lemmanum}[section]
\newcounter{notationnum}[section]
\newtheorem{theorem}{Theorem}[section]
\newcounter{theoremnum}[section]
\newtheorem{definition}{Definition}[section]
\newcounter{definitionnum}[section]
\newtheorem{corollary}{Corollary}[section]
\newcounter{corollarynum}[section]
\newtheorem{remark}{Remark}[section]
\newcounter{remarknum}[section]
\newcounter{propositionnum}[section]
\newcounter{acknowledgementnum}[section]
\newcounter{algorithmnum}[section]
\newcounter{axiomnum}[section]
\newcounter{casenum}[section]
\newcounter{claimnum}[section]
\newcounter{summarynum}[section]
\newcounter{problemnum}[section]
\newenvironment{proof}[1][]{\textbf{Proof.} }{}
\begin{document}

\title{Hidden Symmetries for Ellipsoid--Solitonic Deformations of Kerr--Sen
Black Holes and Quantum Anomalies}
\date{December 15, 2012}
\author{ Sergiu I. Vacaru\thanks{
sergiu.vacaru@uaic.ro; http://www.scribd.com/people/view/1455460-sergiu} \\
{\small {\textsl{\ Science Department, University "Al. I. Cuza" Ia\c si},} }%
\\
{\small {\textsl{\ 54 Lascar Catargi street, 700107, Ia\c si, Romania}} } }
\maketitle

\begin{abstract}
We prove the existence of hidden symmetries in the general relativity theory
defined by exact solutions with generic off--diagonal metrics, nonholonomic
(non--integrable) constraints, and deformations of the frame and linear
connection structure. A special role in characterization of such spacetimes
is played by the corresponding nonholonomic generalizations of
Stackel--Killing and Killing--Yano tensors. There are constructed new
classes of black hole solutions and studied hidden symmetries for
ellipsoidal and/or solitonic deformations of "prime" Kerr--Sen black holes
into "target" off--diagonal metrics. In general, the classical conserved
quantities (integrable and not--integrable) do not transfer to the quantized
systems and produce quantum gravitational anomalies. We prove that such
anomalies can be eliminated via corresponding nonholonomic deformations of
fundamental geometric objects (connections and corresponding Riemannian and
Ricci tensors) and by frame transforms.

\vskip0.1cm

\textbf{Keywords:}\ Hidden symmetries in general relativity, gravitational
anomalies, nonholonomic deformations, generalized Stackel--Killing and
Killing--Yano tensors.

\vskip3pt MSC2010:\ 53Z05, 53C15,81T50, 83C10

PACS2008:\ 04.90.+e, 04.40.Nr, 11.30.Ly, 11.30.Na
\end{abstract}



\section{Introduction}

A very important task in the study of classical and quantum gravitational
and matter field interactions is to determine the corresponding symmetries
of dynamical systems and to identify the constants of motion and related
conservation laws. In general, the evolution of a dynamical system is
described in the phase-space when nonholonomic (equivalently, anholonomic
and/or non--integrable) constraints are imposed on dynamical variables.
Various methods of nonholonomic geometry are considered in classical
Lagrange/Hamilton mechanics, in quantum field theory of gauge fields and,
for instance, in the Dirac approach to perturbative quantum gravity.

It is natural to search of conserved quantities corresponding to
nonholonomically deformed and/or hidden symmetries of the complete
phase--space, not just for the configuration one. For instance, a
fundamental hidden symmetry for rotating black hole spacetimes with
spherical horizon topology and Taub--NUT solutions is the Killing--Yano (KY)
one \cite{yano1}, see details in references in \cite{chen,page,visin1}.
There were considered various types of generalized and associated hidden
symmetries. A large class of symmetries is characterized by higher rank
symmetric Stackel--Killing (SK) tensors which generalize the Killing vectors.%
\footnote{%
Such higher order symmetries are called "hidden symmetries" and the
corresponding holonomic (un--constrained) values are quadratic, or (in
general) polynomial in momenta. In this work, we study more general
gravitational nonlinear systems with non--integrable constraints and
"nonholonomic hidden symmetries".} There were considered also antisymmetric
KY tensors, corresponding conformal extensions (CSK and CKY) etc, see
reviews of results and applications in \cite{sb,frolov,vh}.

Quite generally, this paper is connected with three recent directions in
modern string gravity, supergravity and applications of Ricci flow theory in
physics. The first one is with extensions of the KY symmetry in the presence
of skew--symmetric torsion \cite{yb,kub,ahmedov,houri}. The second one was
proposed in two papers on anholonomic frames, generalized Killing equations
and Dirac operators considered on anisotropic Taub NUT spinning spaces \cite%
{vp,vt}. The third direction is related to some geometric methods in the
geometry of nonholonomic Ricci flows and constructing wave type solutions
for Einstein and/or Finsler spaces \cite{vex2,vex3}.

In our recent works, there were proved two important results: 1) Using the
anholonomic deformation method \cite{vex2,vex3}, it is possible to decouple
(equivalently, separate) the Einstein equations and generate exact solutions
in very general forms. 2) Working with nonholonomic deformations of the
Levi--Civita connection to certain auxiliary connections (also uniquely
defined by the metric and/or corresponding almost symplectic structure), the
Einstein gravity and generalizations were quantized in certain equivalent
forms using deformation quantization, A--brane formalism and two connection
renormalization \cite{vdq3,vabr}. Such constructions are characterized by
special types of symmetries derived for the corresponding class of
admissible nonholonomic deformations of geometric and physical objects in
classical and quantum gravity. This motivates our study of associated hidden
symmetries determined by nonholonomic deformations of KY and SK tensors.

The nonholonomic geometries with induced (by metric structure) torsions have
some similarities, for instance, with possible extensions (black hole
spacetimes, various supergravity theories etc) of the KY symmetry in the
presence of skew--symmetric torsion \cite%
{yb,kub,ahmedov,wu,chong,strom,agricola}. They also admit generalized KY
tensors with corrections which became trivial if torsion fields vanish. In
order to study the existing common features and differences of spacetimes
with ''generic'' torsions and induced (nonholonomically) torsions (which are
equivalent to the field equations in GR), we provide explicit constructions
and analyze the basic properties of nonholonomic KY tensors and theirs CKY
extensions.

The results on exact off--diagonal solutions and quantum gravitational field
theories with nontrivial nonholonomic constraints give rise to two natural
questions: 1) wether the nonholonomic KY and/or SK symmetries are relied, or
not, to certain fundamental properties of vacuum and non--vacuum
gravitational, and gravitational--matter fields interactions and 2) if there
are some interesting and physically important spacetimes characterized by
such nonholonomic symmetries? One of the aims of this work is to show that
nonholonomic hidden symmetries exist naturally for any generic off--diagonal
spacetime and associated nonholonomic frame structures. We shall provide
explicit examples of exact solutions (for black ellipsoids, nonholonomic
deformations and gravitational and solitons) and construct/study the
corresponding nonholonomic generalization of KY and SK tensors.

The above mentioned types of nonholonomic hidden symmetries are fundamental
ones characterizing generic off--diagonal and non--integrable classical
gravitational interactions. Passing to quantized systems it is necessary to
investigate if such symmetries "survive", or not, and to perform a rigorous
study of the corresponding possible conserved quantities and separability of
the field/motion equations can be preserved. It is well known that in the
cases of hidden symmetries for the Levi--Civita connection there are
anomalies representing discrepancies between the conservation laws the the
classical level and the corresponding ones at the quantum level \cite%
{carter1,ivv}. One of main goals of this work is to prove that certain
classes of anomalies can be eliminated via corresponding nonholonomic
deformations and frame transforms. Such nonholonomic hidden symmetries exist
for adapted distinguished connections, they are not violated under certain
quantization schemes, being determined completely by the components of the
metric tensor. The constructions can be redefined equivalently in terms of
the "standard" Levi--Civita connection but the corresponding formulas are
quite sophisticate and less obvious then the nonholonomic ones if there are
considered general nonlinear classical and quantum gravitational systems.

The plan of the paper is as follows:\ In section \ref{s2} we outline the
anholonomic deformation method of constructing generic off--diagonal
solutions in gravity theories. We show how the Einstein equations can be
reformulated equivalently in nonholonomic variables which allows a ''magic
splitting'' into subsystems which can be integrated in very general forms.
In section \ref{s4}, the hidden symmetries are studied for explicit examples
of exact solutions described by generic off--diagonal metrics. There are
considered the nonholonomic symmetries for Kerr--Sen black ellipsoid/hole
configurations and their solitonic deformations.

In section \ref{s5} we prove that nonholonomic hidden symmetries for the
Einstein gravity and certain metric compatible generalizations survive under
transitions from classical to quantum spacetimes, which is very different
for the case of holonomic analogs with the Levi--Civita connection. We
conclude that the gravitational anomalies can be canceled via nonholonomic
deformations for very general nonlinear gravitational systems. Finally,
section \ref{s6} is devoted to conclusions. Some necessary coefficient
formulas are presented in Appendix.

\section{Nonholonomic Variables and Exact Solutions}

\label{s2} In general relativity theory (in brief, GR) the geometry
nonlinear and ''non--integrable'' interactions of particles and fields on a
(pseudo) Riemannian spacetime $V, $ $\dim V=4,$ (endowed with a metric
tensor $\mathbf{g}$ of signature $(-1,1,1,1)$), is encoded into certain
structures of moving frames (equivalently, tetrads/ vierbeins) $e_{\alpha
}=e_{\ \alpha}^{\underline{\alpha }}(u)\partial /\partial u^{\underline{%
\alpha }}$ and nonholonomic constraints $e_{\alpha }e_{\beta }-e_{\beta
}e_{\alpha }=W_{\alpha \beta }^{\gamma}e_{\gamma }$. The anholonomy
coefficients $W_{\alpha \beta }^{\gamma }=W_{\alpha \beta }^{\gamma }(u)$
vanish for holonomic, i. e. integrable, configurations. We shall follow such
conventions: the small Greek indices $\alpha ,\beta ,..$ can be abstract or
coordinate ones (we shall use also primed/underlined etc indices for
distinguishing different types of (non)holonomic frame decompositions).
Indices may be underlined, $\underline{\alpha }, \underline{\beta },...$ in
order to emphasize that they are coordinate ones, running values $\underline{%
\alpha },\underline{\beta }...=1,2,3,4$; we shall omit, for simplicity, any
priming/ underlying etc if that will not result in ambiguities. On
convenience, we shall consider also ''permutations'' of signatures with
metrics locally parametrized in the form $(1,-1,1,1),\ (1,1,-1,1),$ or $%
(1,1,1,-1).$ The local coordinates of a point $u\in V$ are labeled $%
u^{\beta},$ partial derivatives are $\partial _{\beta }:=\partial /\partial
u^{\beta}$ and functions are written as $f(u)\equiv f(u^{\alpha }).$

Under frame transforms, the coefficients of a metric $\mathbf{g=}g_{\alpha
\beta }e^{\alpha }$ $\otimes e^{\beta },$ for a dual $e^{\alpha },$ for $%
e^{\alpha }\rfloor e_{\beta }=\delta _{\beta }^{\alpha }$ (where the 'hook'
operator $\rfloor $ corresponds to the inner derivative and $\delta _{\beta
}^{\alpha }$ is the Kronecker symbol) are re--defined following the rule%
\begin{equation}
g_{\alpha \beta }=e_{\ \alpha }^{\underline{\alpha }}e_{\ \beta }^{%
\underline{\beta }}g_{\underline{\alpha }\underline{\beta }}.
\label{metrtransf}
\end{equation}%
On a spacetime manifold $V$, we can consider various types of \textit{%
nonholonomic structures} parametrized by some sets of coefficients $\{e_{\
\alpha }^{\underline{\alpha }}:\partial _{\underline{\alpha }}$ $\rightarrow
e_{\alpha }\}.$

\subsection{Manifolds with associated N--connections}

In a general context, we can use the definition of nonholonomic manifold $%
\mathcal{V}=(V,\emph{N}),$ where $V$ is a pseudo--Riemannian spacetime and $%
\mathcal{N}$ is a nonholonomic distribution.\footnote{%
Such a distribution can be stated by an arbitrary function (or a set of
functions) on $V$ prescribing a vierbein structure $e_{\ \alpha }^{%
\underline{\alpha }} $ following certain geometric principles. In modern
gravity, it is largely used the so--called ADM (Arnowit--Deser--Misner)
splitting, 3+1, see details in \cite{misner}. For our purposes, it is
convenient to work with an alternative non--integrable 2+2 splitting, which
allows us to decouple the Einstein equations and integrate them in "very"
general forms \cite{vex2,vex3}. Such a technique of generating exact
solutions can not be elaborated working only with 3+1 decompositions.} In
this paper, we shall consider distributions defining a $2+2$ splitting, i.e.
a non--integrable fibration, with associated nonlinear connection
(N--connection) structure $\emph{N}=\mathbf{N}$ satisfying the properties:

\begin{enumerate}
\item A N--connection is introduced as a Whitney sum $\mathbf{N}:T\mathbf{V}%
=h\mathbf{V}\oplus v\mathbf{V}$ defining a conventional horizontal (h) and
vertical (v) splitting.

\item N--adapted parametrizations of local coordinates and frames and
respective tensor indices on $\mathbf{V}$ are considered for $u^{\alpha
}=(x^{i},y^{a}),$ where h--indices take values $i,j,....=1,2$ and v--indices
take values $a,b,...=3,4,$ for $a=i+1$ and $b=j+2$ respectively contracted
with $i $ and $j$. This introduces a local fibred structure when the
coefficients of N--connection, $N_{i}^{a},$ for $\mathbf{N}%
=N_{i}^{a}(u)dx^{i}\otimes \partial /\partial y^{a}$, define N--adapted
/--elongated local bases (partial derivatives), $\mathbf{e}_{\nu }=(\mathbf{e%
}_{i},e_{a}),$ and cobases (differentials), $\mathbf{e}^{\mu }=(e^{i},%
\mathbf{e}^{a}),$ when
\begin{eqnarray}
&&\mathbf{e}_{i}=\frac{\partial }{\partial x^{i}}-\ N_{i}^{a}\frac{\partial
}{\partial y^{a}},\ e_{a}=\frac{\partial }{\partial y^{a}},  \label{nader} \\
\mbox{ and } &&e^{i}=dx^{i}, \mathbf{e}^{a}=dy^{a}+\ N_{i}^{a}dx^{i}.
\label{nadif}
\end{eqnarray}
We shall use boldface symbols for spaces enabled with N--connection
structure.

\item In general a metric structure
\begin{equation}
\ \mathbf{g}=\ \underline{g}_{\alpha \beta }\left( u\right) du^{\alpha
}\otimes du^{\beta }  \label{fmetr}
\end{equation}%
on $\mathbf{V}$ is generic off--diagonal, i. e. it can not be diagonalized
via coordinate transforms, and can be always parametrized (up to certain
classes of frame/coordinate transforms of type (\ref{metrtransf})) as%
\begin{equation}
\ \ \underline{g}_{\alpha \beta }\left( u\right) =\left[
\begin{array}{cc}
\ g_{ij}+\ h_{ab}N_{i}^{a}N_{j}^{b} & h_{ae}N_{j}^{e} \\
\ h_{be}N_{i}^{e} & \ h_{ab}%
\end{array}%
\right].  \label{fansatz}
\end{equation}
Equivalently, such a metric can be represented in N--adapted \ form (as a
distinguished metric/tensor, \ d--metric, d--tensor)
\begin{equation}
\ \mathbf{g}=\ g_{ij}(x,y)\ e^{i}\otimes e^{j}+\ g_{ab}(x,y)\mathbf{e}%
^{a}\otimes \mathbf{e}^{b}.  \label{dm}
\end{equation}
\end{enumerate}

A parametrization of type (\ref{fansatz}) should be not confused with
similar ones in the Kaluza--Klein theory when $y^{a}$ are considered as
extra dimension coordinates on which, for instance, a ''cylindrical''
compactification is performed and $N_{j}^{e}(x,y)\sim A_{aj}^{e}(x^{k})y^{a}$
are certain (non) Abelian gauge fields. In our approach, $N_{j}^{e}$ are
with arbitrary nonlinear dependence on $y^{\underline{a}}\rightarrow y^{a}$
and state naturally certain frame coefficients $e_{\ \alpha }^{\underline{%
\alpha }}$ with 2+2 splitting and linear elongations on $N^a_i$ in $\mathbf{e%
}_{\nu }$ (\ref{nader}) and $\mathbf{e}^{\mu }$ (\ref{nadif}). This subclass
of anholonomic frames is subjected to conditions
\begin{equation}
\lbrack \mathbf{e}_{\alpha },\mathbf{e}_{\beta }]=\mathbf{e}_{\alpha }%
\mathbf{e}_{\beta }-\mathbf{e}_{\beta }\mathbf{e}_{\alpha }=W_{\alpha \beta
}^{\gamma }\mathbf{e}_{\gamma },  \label{anhrel1}
\end{equation}
when \ the (antisymmetric) nontrivial anholonomy coefficients $%
W_{ia}^{b}=\partial _{a}N_{i}^{b}$ and $W_{ji}^{a}=\Omega _{ij}^{a}$ are
determined respectively by partial v--derivatives of $N_{i}^{b}$ and the
coefficients of curvature of N--connection $\Omega _{ij}^{a}=\mathbf{e}%
_{j}\left( N_{i}^{a}\right) -\mathbf{e}_{i}\left( N_{j}^{a}\right) .$ Having
prescribed a N--connection 2+2 splitting, we can say that our (pseudo)
Riemannian spacetime is modelled as N--anholonomic manifold (a similar
situation exists for the ADM formalism when the role of N--coefficients is
played by the so--called ''shift'' and ''lapse'' functions).

\subsection{Nonholonomic deformations of linear connections}

From the class of linear connections which can be defined on a manifold $%
\mathbf{V}$, we can select a subclass which is adapted to the N--connection
structure.

\begin{definition}
A distinguished connection, d--connection, $\mathbf{D=}(hD,vD)$ on a
N--anholonomic $\mathbf{V}$ is defined as a linear connection preserving
under parallelism the N--connection structure.
\end{definition}

The N--adapted components $\mathbf{\Gamma }_{\ \beta \gamma }^{\alpha }$ of
a d--connection $\mathbf{D}$ are computed following equations $\mathbf{D}%
_{\alpha }\mathbf{e}_{\beta }=\mathbf{\Gamma }_{\ \alpha \beta }^{\gamma }%
\mathbf{e}_{\gamma }$ and parametrized in the form $\ \mathbf{\Gamma }_{\
\alpha \beta }^{\gamma }=\left(
L_{jk}^{i},L_{bk}^{a},C_{jc}^{i},C_{bc}^{a}\right) ,$ where $\mathbf{D}%
_{\alpha }=(D_{i},D_{a}),$ with $h\mathbf{D}=(L_{jk}^{i},L_{bk}^{a})$ and $v%
\mathbf{D}=(C_{jc}^{i},$ $C_{bc}^{a})$ determining covariant, respectively,
h-- and v--derivatives. We can associate a differential 1--form $\mathbf{%
\Gamma }_{\ \beta }^{\alpha }=\mathbf{\Gamma }_{\ \beta \gamma }^{\alpha }%
\mathbf{e}^{\gamma }$ and perform a N--adapted differential form calculus.

The torsion $\mathcal{T}^{\alpha }=\{\mathbf{T}_{\ \beta \gamma }^{\alpha
}\} $ and curvature $\mathcal{R}_{~\beta }^{\alpha }=\{\mathbf{\mathbf{R}}%
_{\ \ \beta \gamma \delta }^{\alpha }\}$ of $\ \mathbf{D}$ are computed
respectively
\begin{eqnarray}
\ ^{\mathbf{D}}\mathcal{T}^{\alpha }&:=&\mathbf{De}^{\alpha }=d\mathbf{e}%
^{\alpha }+\mathbf{\Gamma }_{\ \beta }^{\alpha }\wedge \mathbf{e}^{\beta }
\label{tors} \\
\mbox{ and } \ ^{\mathbf{D}}\mathcal{R}_{~\beta }^{\alpha } &:=& \mathbf{%
D\Gamma }_{\ \beta }^{\alpha }=d\mathbf{\Gamma }_{\ \beta }^{\alpha }-%
\mathbf{\Gamma }_{\ \beta }^{\gamma }\wedge \mathbf{\Gamma }_{\ \gamma
}^{\alpha }=\mathbf{R}_{\ \beta \gamma \delta }^{\alpha }\mathbf{e}^{\gamma
}\wedge \mathbf{e}^{\delta } ,  \label{curv}
\end{eqnarray}%
see Refs. \cite{vex3} for \ explicit calculi of coefficients (we provide
some necessary component formulas in Appendix).\footnote{%
We shall use certain left ''up'' or ''low'' labels in order to emphasize
that certain geometric objects are determined by another fundamental
geometric object, \ for instance, that the torsion$\ ^{\mathbf{D}}\mathcal{T}%
^{\alpha }$ is determined by d--connection $\mathbf{D}$. We shall omit such
labels if that will not result in ambiguities.}

For any metric $\ \mathbf{g}$ (\ref{fmetr}) (equivalently, d--metric (\ref%
{dm})), we can introduce the Levi--Civita (LC) connection $\nabla =\{\
_{\shortmid }\Gamma _{\ \beta \gamma }^{\alpha }\}$ as the unique one
satisfying the metric compatibility condition, $\nabla \mathbf{g}=0,$ and
zero torsion \ condition, $\ ^{\nabla }\mathcal{T}^{\alpha }:=0.$ Such a
linear connection is not a d--connection because it does not preserve under
frame/coordinate/parallel transforms a prescribed N--connection splitting.
Nevertheless, it is possible to construct decompositions of type $\ ^{%
\mathbf{g}}\nabla =\ ^{\mathbf{g}}\mathbf{D}+\ ^{\mathbf{g}}\mathbf{Z,}$ for
$\ ^{\mathbf{g}}\mathbf{Dg=0,}$ when both the linear connection $\ ^{\mathbf{%
g}}\nabla $ and metric compatible d--connection $\ ^{\mathbf{g}}\mathbf{D}$
and the distortion tensor $\ ^{\mathbf{g}}\mathbf{Z}$ are defined in a
unique form, following well defined geometric principles, by a metric tensor
$\mathbf{g}$ when a N--connection structure $\mathbf{N}$ is presecribed on a
N--anholonomic spacetime $\mathbf{V.}$

\begin{theorem}
--\textbf{Definition.\ } \ There is a canonical d--connection $\widehat{%
\mathbf{D}}$ completely and uniquely defined by a (pseudo) Riemannian metric
$\ \mathbf{g}$ (\ref{fmetr}) for a chosen $\mathbf{N=\{}N_{i}^{a}\}$ if and
only if $\ \widehat{\mathbf{D}}\mathbf{g}=0$ and the horizontal and vertical
torsions are zero, i.e. $h\widehat{\mathcal{T}}=\{\mathbf{T}_{\ jk}^{i}\}=0$
and $v\widehat{\mathcal{T}}=\{\mathbf{T}_{\ bc}^{a}\}=0.$
\end{theorem}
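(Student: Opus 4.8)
The plan is to establish existence and uniqueness separately, both by explicit construction in N-adapted frames. The key structural observation is that the two imposed conditions — metric compatibility $\widehat{\mathbf{D}}\mathbf{g}=0$ together with vanishing of only the pure horizontal torsion $\mathbf{T}_{\ jk}^{i}$ and the pure vertical torsion $\mathbf{T}_{\ bc}^{a}$ — do not overdetermine the connection, because a general d-connection carries the four independent blocks $\mathbf{\Gamma}_{\ \alpha\beta}^{\gamma}=(L_{jk}^{i},L_{bk}^{a},C_{jc}^{i},C_{bc}^{a})$ of coefficients, and the conditions will turn out to pin down exactly these in terms of $g_{ij}$, $g_{ab}$ and the N-connection coefficients $N_i^a$.

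First I would write out the metric-compatibility condition block by block using the N-adapted basis $\mathbf{e}_\nu=(\mathbf{e}_i,e_a)$ from (\ref{nader}), noting that since $\mathbf{g}$ splits as $g_{ij}e^i\otimes e^j + g_{ab}\mathbf{e}^a\otimes\mathbf{e}^b$ in (\ref{dm}) with no mixed $h$-$v$ terms, the compatibility equations decouple into a horizontal set $\mathbf{e}_k g_{ij}=\mathbf{\Gamma}$-terms and a vertical set $e_c g_{ab}=\mathbf{\Gamma}$-terms, with no cross-constraints forcing the off-block coefficients. This is the feature that makes the canonical d-connection differ from the Levi--Civita one: because we demand only $h\widehat{\mathcal{T}}=0$ and $v\widehat{\mathcal{T}}=0$ rather than the full $\ ^{\nabla}\mathcal{T}^\alpha=0$, the mixed torsion components $\mathbf{T}_{\ ja}^i$, $\mathbf{T}_{\ bi}^a$ etc. are permitted to be nonzero and are instead fixed by the N-connection data through the anholonomy coefficients $W_{\alpha\beta}^\gamma$ in (\ref{anhrel1}).

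The concrete construction is then a Koszul-type solve. For the horizontal block I would impose $\mathbf{T}_{\ jk}^i=L_{jk}^i-L_{kj}^i=0$ (symmetry in the lower $h$-indices) together with $h$-metric compatibility, and solve the resulting linear system to get the standard Christoffel-like formula $L_{jk}^i=\tfrac12 g^{ir}(\mathbf{e}_k g_{jr}+\mathbf{e}_j g_{kr}-\mathbf{e}_r g_{jk})$, but with the $\mathbf{e}$-derivatives (not ordinary partials) supplying the N-connection elongation. For the vertical block I would do the identical calculation with $\mathbf{T}_{\ bc}^a=C_{bc}^a-C_{cb}^a=0$ and $v$-compatibility to get $C_{bc}^a=\tfrac12 g^{ad}(e_c g_{bd}+e_b g_{cd}-e_d g_{bc})$. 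The remaining coefficients $L_{bk}^a$ and $C_{jc}^i$ are the genuinely new pieces: they are determined by combining the surviving compatibility equations with the mixed-index pieces and the $N_i^a$ through $\Omega_{ij}^a$ and $\partial_a N_i^b$; here one reads off expressions such as $L_{bk}^a=e_b(N_k^a)+\tfrac12 g^{ac}(e_k g_{bc}-g_{dc}e_b N_k^d-g_{db}e_c N_k^d)$, referencing the Appendix formulas permitted by the excerpt.

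Existence follows because this explicit solution manifestly defines a d-connection (it preserves the $h$-$v$ splitting by construction, each block mapping the correct subbundle), satisfies $\widehat{\mathbf{D}}\mathbf{g}=0$ block-wise, and has the two prescribed torsion components vanishing. Uniqueness follows because at each stage the linear system solved was square and nondegenerate: given $\mathbf{g}$ and $\mathbf{N}$, invertibility of $g_{ij}$ and $g_{ab}$ forces every coefficient to the stated value, so any d-connection meeting all hypotheses must coincide with $\widehat{\mathbf{D}}$. The main obstacle I anticipate is purely bookkeeping rather than conceptual — namely keeping the N-adapted derivatives $\mathbf{e}_i=\partial_i-N_i^a\partial_a$ straight from ordinary partials throughout, and correctly isolating the mixed coefficients $L_{bk}^a,C_{jc}^i$ whose defining equations are entangled with the anholonomy terms; the symmetric $h$- and $v$-blocks are routine, but verifying that the mixed blocks are simultaneously forced and consistent (i.e. that the full system is neither over- nor under-determined) is the delicate point, and is exactly what distinguishes this canonical d-connection from $\nabla$.
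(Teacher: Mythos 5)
Your route is the same as the paper's: the paper's entire proof is a pointer to the explicit coefficient formulas (\ref{candcon}), and you reconstruct those formulas by a block-wise Koszul argument in the N-adapted frame. For the pure blocks this works: $\widehat{\mathbf{D}}\mathbf{g}=0$ together with $\mathbf{T}_{\ jk}^{i}=L_{jk}^{i}-L_{kj}^{i}=0$ determines $\widehat{L}_{jk}^{i}$ by the usual cyclic-permutation argument (with $\mathbf{e}$-derivatives in place of partials), and likewise $\mathbf{T}_{\ bc}^{a}=0$ determines $\widehat{C}_{bc}^{a}$; together with a direct check that the exhibited coefficients satisfy all the hypotheses, this gives existence.

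The gap is in your uniqueness claim for the mixed blocks --- exactly the point you flagged as ``delicate'' and then asserted without computation. The only compatibility equations involving $C_{jc}^{i}$ are $\mathbf{D}_{c}g_{ij}=0$, i.e. $e_{c}g_{ij}=C_{ijc}+C_{jic}$ with $C_{ijc}:=g_{ir}C_{jc}^{r}$; this fixes only the part of $C_{ijc}$ symmetric in $(i,j)$, and the antisymmetric part $C_{[ij]c}$ is touched by none of the remaining hypotheses, because the corresponding torsion component $\widehat{T}_{\ ja}^{i}=\widehat{C}_{jb}^{i}$ in (\ref{dtors}) is precisely one of those \emph{not} required to vanish. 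The same happens for $L_{bk}^{a}$: the equation $\mathbf{D}_{k}h_{ab}=0$ fixes only the symmetric part of $h_{ca}L_{bk}^{c}$ in $(a,b)$, while $\widehat{T}_{aj}^{c}=\widehat{L}_{aj}^{c}-e_{a}(N_{j}^{c})$ is likewise unconstrained. So the hypotheses of the theorem actually leave a $\left[m\,\tfrac{n(n-1)}{2}+n\,\tfrac{m(m-1)}{2}\right]$--parameter family of metric-compatible d--connections with vanishing h-- and v--torsion (already four free functions for the $2+2$ splitting used here); the linear systems for the mixed blocks are not ``square and nondegenerate.'' The formulas (\ref{candcon}) resolve the ambiguity by an extra normalization --- $C_{ijc}=\tfrac{1}{2}e_{c}g_{ij}$ is \emph{chosen} symmetric in $(i,j)$, and the antisymmetric part of $h_{ca}L_{bk}^{c}$ is fixed by the $e_{b}(N_{k}^{a})$ term --- and that choice must either be added to the hypotheses or derived from some further principle; the paper's one-line proof does not supply one either. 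Your existence argument survives; your uniqueness argument, as written, does not close.
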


\begin{proof}
It follows from explicit constructions provided in Appendix, see formulas (%
\ref{candcon}).$\square $
\end{proof}

\vskip5pt

We have the canonical distortion relation
\begin{equation}
\nabla =\widehat{\mathbf{D}}+\widehat{\mathbf{Z}},  \label{distorsrel}
\end{equation}%
when both linear connections $\nabla =\{\ _{\shortmid }\Gamma _{\ \beta
\gamma }^{\alpha }\}$ and $\widehat{\mathbf{D}}=\{\widehat{\mathbf{\Gamma }}%
_{\ \alpha \beta }^{\gamma }\}$ and the distorting tensor $\widehat{\mathbf{Z%
}}=\{\widehat{\mathbf{\ Z}}_{\ \alpha \beta }^{\gamma }\}$ are uniquely
defined by the same metric tensor $\mathbf{g}$ (\ref{fmetr}) (see
coefficient formulas (\ref{deflc}) and (\ref{deft}); for simplicity, we
omitted the left label ''$\mathbf{g}$''). The connection $\widehat{\mathbf{D}%
}$ is with nontrivial torsion (in general, the coefficients $\widehat{T}_{\
ja}^{i},\widehat{T}_{\ ji}^{a}$ and $\widehat{T}_{\ bi}^{a}$ are not zero,
see (\ref{dtors})).

\begin{definition}
The torsion $\widehat{\mathbf{T}}=\{\widehat{\mathbf{T}}_{\ \alpha \beta
}^{\gamma }\}$ of \ $\widehat{\mathbf{D}},$ completely defined by data $%
\left( \mathbf{g,N}\right) $ is \ called the canonical d--torsion of a
N--anholonomic (pseudo) Riemannian manifold $\mathbf{V.}$
\end{definition}

Such a torsion is nonholonomically induced by N--connection coefficients and
completely determined by certain off--diagonal N--terms in (\ref{fansatz}).
The GR theory can be formulated equivalently using the connection $\nabla $
and/or $\widehat{\mathbf{D}}$ if the distorting relation (\ref{distorsrel})
is used.

\subsection{The Einstein equations in N--adapted variables}

The Ricci d--tensor $Ric=\{\mathbf{R}_{\alpha \beta }\}$ of a d--connection $%
\mathbf{D}$ is constructed using a respective contracting of coefficients of
the curvature tensor (\ref{curv}), $\mathbf{R}_{\alpha \beta }\doteqdot
\mathbf{R}_{\ \alpha \beta \tau }^{\tau }.$ The h--/ v--components of this
d--tensor
\begin{equation}
\mathbf{R}_{\alpha \beta }=\{R_{ij}\doteqdot R_{\ ijk}^{k},\ \
R_{ia}\doteqdot -R_{\ ika}^{k},\ R_{ai}\doteqdot R_{\ aib}^{b},\
R_{ab}\doteqdot R_{\ abc}^{c}\},  \label{dricci}
\end{equation}%
see explicit coefficients formulas in Refs. \cite{vex3}. The scalar
curvature of $\mathbf{D}$ is constructed by using the inverse d--metric to $%
\mathbf{g}$ (\ref{dm}), $\mathbf{R}\doteqdot \mathbf{g}^{\alpha \beta }%
\mathbf{R}_{\alpha \beta }=g^{ij}R_{ij}+h^{ab}R_{ab},$ with $R=g^{ij}R_{ij}$
and $S=h^{ab}R_{ab}$ being respectively the h-- and v--components of scalar
curvature.

The Einstein equations for a metric $\mathbf{g}_{\beta \delta }$ are written
in standard form using the LC--connection $\nabla $ (and various tetradic,
spinor, 3+1 splitting etc representations). Using the distortion relations
of (\ref{distorsrel}), we can compute the respective distortions of the
Ricci tensor and of the scalar curvature and consider such values as
''effective'' sources additionally to the energy--momentum tensor for matter
fields, i.e. to $\varkappa T_{\beta \delta }$, where $\varkappa $ is defined
by the gravitational \ constant. If $\ ^{\mathbf{g}}\nabla =\ ^{\mathbf{g}}%
\mathbf{D}+\ ^{\mathbf{g}}\mathbf{Z,}$ we do not need additional field
equations (for torsion fields) like in the Einstein--Cartan, gauge or string
gravity theories.

\begin{theorem}
The Einstein equations in GR can be rewritten equivalently for a
nonholonomic 2+2 splitting and respective canonical d--connection $\widehat{%
\mathbf{D}},$%
\begin{eqnarray}
&&\widehat{\mathbf{R}}_{\ \beta \delta }-\frac{1}{2}\mathbf{g}_{\beta \delta
}\ ^{s}R=\mathbf{\Upsilon }_{\beta \delta },  \label{cdeinst} \\
&&\widehat{L}_{aj}^{c}=e_{a}(N_{j}^{c}),\ \widehat{C}_{jb}^{i}=0,\ \Omega
_{\ ji}^{a}=0,  \label{lcconstr}
\end{eqnarray}%
where the source $\mathbf{\Upsilon }_{\beta \delta }$ is such way
constructed that $\mathbf{\Upsilon }_{\beta \delta }\rightarrow \varkappa
T_{\beta \delta }$ \ for $\widehat{\mathbf{D}}\rightarrow \nabla .$
\end{theorem}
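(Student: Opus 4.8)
The plan is to obtain (\ref{cdeinst})--(\ref{lcconstr}) directly from the canonical distortion relation (\ref{distorsrel}), exploiting that the preceding Theorem--Definition supplies both $\widehat{\mathbf{D}}$ and the distorting tensor $\widehat{\mathbf{Z}}$ as explicit functionals of the same data $(\mathbf{g},\mathbf{N})$. First I would start from the Einstein equations in their usual Levi--Civita form, ${}_{\shortmid}\mathbf{R}_{\beta\delta}-\tfrac{1}{2}\mathbf{g}_{\beta\delta}\,{}_{\shortmid}R=\varkappa T_{\beta\delta}$, and substitute $\nabla=\widehat{\mathbf{D}}+\widehat{\mathbf{Z}}$ into the curvature $2$--form (\ref{curv}). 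Expanding gives, schematically, ${}^{\nabla}\mathcal{R}={}^{\widehat{\mathbf{D}}}\mathcal{R}+\widehat{\mathbf{D}}\widehat{\mathbf{Z}}+\widehat{\mathbf{Z}}\wedge\widehat{\mathbf{Z}}$, and contracting as in (\ref{dricci}) splits the LC Ricci tensor and scalar curvature into a canonical part carried by $\widehat{\mathbf{D}}$ and a distortion part built from the coefficients (\ref{deflc}) and (\ref{deft}).

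Collecting the distortion contributions on the matter side, I would set
\begin{equation*}
\mathbf{\Upsilon}_{\beta\delta}:=\varkappa T_{\beta\delta}-\left({}^{Z}\mathbf{R}_{\beta\delta}-\tfrac{1}{2}\mathbf{g}_{\beta\delta}\,{}^{Z}R\right),
\end{equation*}
where ${}^{Z}\mathbf{R}_{\beta\delta}$ and ${}^{Z}R$ denote the parts of the Ricci tensor and scalar curvature generated by $\widehat{\mathbf{Z}}$. With this definition the rewritten equation is identically (\ref{cdeinst}); and because $\widehat{\mathbf{Z}}$, hence the $Z$--labelled terms, are determined algebraically and differentially by $(\mathbf{g},\mathbf{N})$, no independent field equations for torsion are required, consistently with the remark following (\ref{distorsrel}). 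One must then check that $\mathbf{\Upsilon}_{\beta\delta}$ is symmetric and compatible with the contracted Bianchi identities for $\widehat{\mathbf{D}}$, which follows once the mixed terms are assembled tensorially.

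The second group (\ref{lcconstr}) encodes exactly the vanishing of the distortion. From the canonical torsion coefficients (\ref{dtors}) the only possibly nonzero components of $\widehat{\mathbf{T}}$, and therefore of $\widehat{\mathbf{Z}}$, are those fed by $\widehat{T}_{\ ja}^{i}$, $\widehat{T}_{\ ji}^{a}$ and $\widehat{T}_{\ bi}^{a}$; I would verify that imposing $\widehat{L}_{aj}^{c}=e_{a}(N_{j}^{c})$, $\widehat{C}_{jb}^{i}=0$ and $\Omega_{\ ji}^{a}=0$ annihilates each of these, so that $\widehat{\mathbf{Z}}\to 0$ and $\widehat{\mathbf{D}}\to\nabla$ coefficientwise, whence ${}^{Z}\mathbf{R}_{\beta\delta},\,{}^{Z}R\to 0$ and $\mathbf{\Upsilon}_{\beta\delta}\to\varkappa T_{\beta\delta}$. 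This yields the claimed equivalence with the standard Einstein equations on the constraint surface. The main obstacle will be the N--adapted bookkeeping: showing that the horizontal and vertical contractions (\ref{dricci}) of $\widehat{\mathbf{D}}\widehat{\mathbf{Z}}+\widehat{\mathbf{Z}}\wedge\widehat{\mathbf{Z}}$ really collapse into a well-defined symmetric source, and that the three conditions (\ref{lcconstr}) are precisely, neither more nor less than, what kills the distortion. These are the explicit computations deferred to the Appendix through (\ref{candcon}), (\ref{deflc}) and (\ref{deft}), whereas the substitution and contraction described above are routine.
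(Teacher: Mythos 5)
Your argument is correct and follows essentially the same route as the paper: the decisive step in both is that the constraints (\ref{lcconstr}) annihilate the torsion coefficients (\ref{dtors}) and the distortion tensor (\ref{deft}), so that $\widehat{\mathbf{\Gamma}}_{\ \alpha\beta}^{\gamma}=\Gamma_{\ \alpha\beta}^{\gamma}$ with respect to N--adapted frames and (\ref{cdeinst}) reduces to the standard Einstein equations with $\mathbf{\Upsilon}_{\beta\delta}\rightarrow\varkappa T_{\beta\delta}$. Your explicit off--constraint construction of $\mathbf{\Upsilon}_{\beta\delta}$ from the $\widehat{\mathbf{Z}}$--part of the Ricci tensor is precisely what the paper sketches in the paragraph preceding the theorem; the paper's own proof omits that expansion and works directly on the constraint surface.
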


\begin{proof}
If the constraints (\ref{lcconstr}) are satisfied the tensors $\widehat{%
\mathbf{T}}_{\ \alpha \beta }^{\gamma }$ (\ref{dtors}) and $Z_{\ \alpha
\beta }^{\gamma }$(\ref{deft}) are zero. This states that $\widehat{\mathbf{%
\Gamma }}_{\ \alpha \beta }^{\gamma }=\Gamma _{\ \alpha \beta }^{\gamma },$
with respect to N--adapted frames, see (\ref{deflc}), even, in general, $%
\widehat{\mathbf{D}}\neq \nabla $ (the transformation laws under
frame/coordinate transforms of a d--connection and of the LC--connection are
different). In such a case, the equations (\ref{cdeinst}) are completely
equivalent to the Einstein equations for $\nabla .\ \square $
\end{proof}

\vskip5pt

For various purposes, it is more convenient to work with a d--connection $\
^{\mathbf{g}}\mathbf{D}$ instead of $\ ^{\mathbf{g}}\nabla .$ For instacne,
We can prescribe such a nonholonomic 2+2 splitting via N--coefficients $%
\mathbf{N=\{}N_{i}^{a}\}$ when $\ ^{\mathbf{g}}\mathbf{D}$ became a
canonical almost symplectic connection, which is important for deformation
quantization and/or A--brane quantization, or two--connection
renormalization of GR, see \cite{vdq3,vabr,vtwocon}. In section \ref{s6}, we
shall show that cancelations of anomalies are possible for $\widehat{\mathbf{%
D}}$ being a solution (\ref{cdeinst}) even, in general, quantum anomalies
exist for $\nabla .$

\subsection{On generic off--diagonal Einstein spaces}

It is possible a ''magic'' separation (i. e. decoupling; in our works, such
words are used as equivalent ones stating some properties of a system of
nonlinear partial differential equations) of the Einstein equations (\ref%
{cdeinst}) for $\widehat{\mathbf{D}}$ if certain types of \ parametrizations
for the coefficients of N--connection $\mathbf{N=\{}N_{i}^{a}\}$ of the
metric $\ \mathbf{g}$ (\ref{fmetr}) (equivalently, d--metric (\ref{dm})) are
considered, see details, proofs and examples in Refs. \cite{vex2,vex3,vp,vt}%
. This allows us to solve the Einstein equations in very general forms for
any general sources\footnote{%
such sources should be defined in explicit form from certain additional
suppositions on interactions of gravitational and matter fields; we omit
such considerations in this work} parametrized in the form
\begin{equation}
\mathbf{\Upsilon }_{\ \delta }^{\alpha }=diag[\mathbf{\Upsilon }%
_{2}(x^{k},y^{3}),\mathbf{\Upsilon }_{2}(x^{k},y^{3}),\mathbf{\Upsilon }%
_{4}(x^{k}),\mathbf{\Upsilon }_{4}(x^{k})],  \label{source}
\end{equation}
for coordinates $u^{\alpha }=(x^{k},y^{3}=v,y^{4}).$ In a particular case,
we can consider Einstein spaces with $\mathbf{\Upsilon }_{2}=\mathbf{%
\Upsilon }_{4}=\lambda = const$ which define off--diagonal Einstein spaces
with, in general, nontrivial cosmological constant $\lambda$.

\begin{lemma}
\label{lem} Any metric $\mathbf{g}$ (\ref{fmetr}) with coefficients of
necessary smooth class on a (pseudo) Riemannian $\mathbf{V}$ can be
represented in N--adapted form as
\begin{eqnarray}
\mathbf{g} &=&g_{i}(x^{k})dx^{i}\otimes dx^{i}+\omega
^{2}(x^{k},v,y^{4})h_{a}(x^{k},v)\mathbf{e}^{a}\otimes \mathbf{e}^{a},
\label{ans1} \\
\mathbf{e}^{3} &=&dy^{3}+w_{j}(x^{k},v)dx^{j},\ \mathbf{e}%
^{4}=dy^{4}+n_{j}(x^{k},v)dx^{j}.  \notag
\end{eqnarray}
\end{lemma}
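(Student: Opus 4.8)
The plan is to show that a metric $\mathbf g$ in the general off–diagonal form (\ref{fansatz}) can always be brought, by an admissible frame/coordinate transform of type (\ref{metrtransf}), into the specific N–adapted shape (\ref{ans1}), where the four metric functions have been arranged so that $g_i$ depends only on $x^k$, the v–part is written as $\omega^2 h_a\,\mathbf e^a\otimes\mathbf e^a$ with a conformal factor $\omega$, and the N–connection coefficients appear linearly through $w_j,n_j$ inside $\mathbf e^3,\mathbf e^4$. First I would start from the d–metric representation (\ref{dm}), $\mathbf g=g_{ij}e^i\otimes e^j+g_{ab}\mathbf e^a\otimes\mathbf e^b$, which is already guaranteed for any $\mathbf g$ once an $\mathbf N=\{N_i^a\}$ is prescribed; the content of the lemma is that the remaining frame freedom suffices to diagonalize the $2\times2$ blocks $g_{ij}$ and $g_{ab}$ and to normalize the conformal structure.

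The key steps, in order, are as follows. (i) On the horizontal block: since $g_{ij}(x,y)$ is a symmetric $2\times2$ matrix of smooth functions, I would invoke the classical result that any two–dimensional (pseudo)Riemannian metric admits isothermal/diagonalizing coordinates, so after a coordinate change $x^i\to x^{i'}$ one may assume $g_{ij}=\mathrm{diag}[g_1,g_2]$; a further reparametrization removes any residual $y$–dependence, yielding $g_i=g_i(x^k)$. (ii) On the vertical block: treat $g_{ab}(x,y)$ likewise as a symmetric $2\times2$ matrix and diagonalize it by a $v$–frame rotation adapted to the N–connection, writing $g_{ab}=\omega^2\,\mathrm{diag}[h_3,h_4]$ where the common factor $\omega^2(x^k,v,y^4)$ absorbs the overall conformal scaling and $h_a=h_a(x^k,v)$ carry the essential $v$–profile. (iii) On the N–connection: the off–diagonal entries $h_{ae}N_i^e$ of (\ref{fansatz}) are reabsorbed into the elongated cobases (\ref{nadif}); relabeling $N_i^3=w_i$ and $N_i^4=n_i$, and restricting their functional dependence to $(x^k,v)$, produces exactly $\mathbf e^3=dy^3+w_j dx^j$ and $\mathbf e^4=dy^4+n_j dx^j$ as displayed.

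The main obstacle I expect is controlling the \emph{functional dependence} of each coefficient rather than merely achieving diagonal form. Diagonalization of a $2\times2$ symmetric matrix is elementary pointwise, but the lemma asserts a very particular dependence pattern — $g_i$ independent of the fibre coordinates, $h_a$ and $w_j,n_j$ depending on $(x^k,v)$ but not on $y^4$, and all the $y^4$–dependence concentrated in $\omega$. Justifying that this pattern can be reached simultaneously is the crux: it relies on the $2+2$ nonholonomic splitting and on the freedom to choose the nonholonomic distribution $\mathbf N$, i.e. one does not claim every \emph{fixed} $\mathbf N$ works, but rather that an admissible $\mathbf N$ and adapted frame exist realizing (\ref{ans1}). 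I would therefore phrase the argument as an existence statement about the transform $e_{\ \alpha}^{\underline\alpha}$ and appeal to the smoothness hypothesis to guarantee the transforms are well defined on the relevant coordinate chart, noting that the construction is local and that the conformal factor $\omega$ provides the extra degree of freedom needed to decouple the $y^4$–dependence from $h_a$. The detailed coefficient matching is routine once this structural decoupling is granted, so I would relegate it to direct substitution into (\ref{dm}).
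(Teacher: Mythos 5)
Your block-by-block diagonalization is a genuinely different route from the paper's, but it leaves the essential step unproven. The paper does not diagonalize the $h$-- and $v$--blocks separately: it writes the target ansatz (\ref{ans1}) in coordinate form as the explicit matrix (\ref{ans1a}) and then treats the matching of a general $g_{\underline{\alpha }\underline{\beta }}$ to that form via (\ref{metrtransf}) as a \emph{quadratic algebraic} problem --- $6+6$ independent metric components versus $16$ unknown vierbein coefficients $e_{\ \alpha }^{\underline{\alpha }}$, solved pointwise up to a fixed coordinate system, with the $2+2$ splitting chosen so that the algebraic equations admit real nondegenerate solutions. The burden of the argument is thus a degrees-of-freedom count for the frame transform, not a sequence of coordinate changes.

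The genuine gap in your version is precisely the point you yourself flag as ``the crux'' and then do not close: the functional-dependence pattern. In step (i), isothermal coordinates for the $2\times 2$ horizontal block are constructed at fixed $y^{a}$, so both the resulting coordinates and the conformal factor depend parametrically on $y^{a}$; the assertion that ``a further reparametrization removes any residual $y$--dependence, yielding $g_{i}=g_{i}(x^{k})$'' is false for transforms $x^{i}\rightarrow x^{i'}(x)$ and is nowhere justified for transforms mixing $x$ and $y$. Likewise, in step (ii) the claim that all $y^{4}$--dependence of $g_{ab}$ can be concentrated in a single scalar $\omega ^{2}(x^{k},v,y^{4})$ while $h_{a}$ and the $N$--coefficients $w_{j},n_{j}$ depend only on $(x^{k},v)$ is exactly what must be proved, and ``relegating the detailed coefficient matching to direct substitution'' or ``phrasing the argument as an existence statement'' does not establish it. To repair the proof you would have to do what the paper does: exhibit the coordinate form (\ref{ans1a}), count the independent functions on both sides of (\ref{metrtransf}), and argue solvability of the resulting quadratic system for $e_{\ \alpha }^{\underline{\alpha }}$ for an appropriately fixed nonholonomic distribution --- or else supply an honest argument that the prescribed dependence on $(x^{k})$, $(x^{k},v)$ and $(x^{k},v,y^{4})$ can be reached simultaneously, which your proposal does not contain.
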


\begin{proof}
A d--metric of type (\ref{ans1}) is of type (\ref{dm}) with $N_{i}^{3}=w_{j}$
and $N_{i}^{4}=n_{j}.$ With respect to a chosen coordinate base, we express (%
\ref{ans1}) in the form (\ref{fmetr}),{\small
\begin{equation}
g_{\underline{\alpha }\underline{\beta }}=\left[
\begin{array}{cccc}
g_{1}+\omega ^{2}(w_{1}^{\ 2}h_{3}+n_{1}^{\ 2}h_{4}) & \omega
^{2}(w_{1}w_{2}h_{3}+n_{1}n_{2}h_{4}) & \omega ^{2}w_{1}h_{3} & \omega
^{2}n_{1}h_{4} \\
\omega ^{2}(w_{1}w_{2}h_{3}+n_{1}n_{2}h_{4}) & g_{2}+\omega ^{2}(w_{2}^{\
2}h_{3}+n_{2}^{\ 2}h_{4}) & \omega ^{2}w_{2}h_{3} & \omega ^{2}n_{2}h_{4} \\
\omega ^{2}w_{1}h_{3} & \omega ^{2}w_{2}h_{3} & \omega ^{2}h_{3} & 0 \\
\omega ^{2}n_{1}h_{4} & \omega ^{2}n_{2}h_{4} & 0 & \omega ^{2}h_{4}%
\end{array}%
\right].  \label{ans1a}
\end{equation}%
} A general metric $g_{\alpha \beta }(x^{k},y^{a})$ on $\mathbf{V}$ can be
always parametrized in the form (\ref{ans1a}) via certain frame transform of
type $g_{\alpha \beta }=e_{\ \alpha }^{\underline{\alpha }}e_{\ \beta }^{%
\underline{\beta }}g_{\underline{\alpha }\underline{\beta }}$ (\ref%
{metrtransf}). For certain given values $g_{\alpha \beta }$ and $g_{%
\underline{\alpha }\underline{\beta }}$ (in GR, there are 6 + 6 independent
components), we have to solve a quadratic algebraic equation in order to
determine 16 coefficients $e_{\ \alpha }^{\underline{\alpha }},$ up to a
fixed coordinate system. We have to fix such nonholonomic 2+2 splitting when
the algebraic equations have real nondegenerate solutions. $\square $
\end{proof}

\vskip5pt

Let us introduce brief denotations for partial derivatives: \ $a^{\bullet
}=\partial a/\partial x^{1},$ $a^{\prime }=\partial a/\partial x^{2},a^{\ast
}=\partial a/\partial y^{3}$ and $a^{\diamond }=\partial a/\partial y^{4}$
and consider such parametrizations of (\ref{ans1}) when $h_{a}^{\ast }\neq 0$
(such conditions can be satisfied for some correspondingly chosen systems of
coordinates and frame transforms $e_{\ \alpha }^{\underline{\alpha }}).$
Using Lemma \ref{lem}, the solutions of the Einstein equations can be also
constructed in very general forms, see details and proofs in \cite{vex2,vex3}
(see also Appendix \ref{asec2}); we note that in different works there are
used some different parametrizations and/or systems of
reference/coordinates):

\begin{theorem}
\label{thgs}A general class of solutions of the gravitational field
equations (\ref{cdeinst}) for the canonical d--connection $\widehat{\mathbf{D%
}}$ with a general N--adapted diagonal source $\mathbf{\Upsilon }_{\ \delta
}^{\alpha }$ (\ref{source}), with $\mathbf{\Upsilon }_{2}\neq 0,$ is
determined by d--metrics of type (\ref{ans1}) (equivalently, by generic
off--diagonal metrics of type (\ref{ans1a})) with the coefficients computed
in the form%
\begin{eqnarray}
g_{i} &=&\epsilon _{i}e^{\psi (x^{k})},  \notag \\
h_{4} &=&\ \underline{h}_{4}(x^{k})\pm \int \left[ \Upsilon _{2}(x^{k},v)%
\right] ^{-1}\left( \exp [2\phi (x^{k},v)]\right) ^{\ast }dv,  \label{data1}
\\
h_{3} &=&\left[ \left( \sqrt{|h_{4}(x^{k},v)|}\right) ^{\ast }\right]
^{2}\exp [-2\phi (x^{k},v)],  \notag \\
w_{1} &=&\phi ^{\bullet }/\phi ^{\ast },w_{2}=\phi ^{\prime }/\phi ^{\ast },
\notag \\
n_{k} &=&\ ^{1}n_{k}(x^{i})+\ ^{2}n_{k}(x^{i})\int \left[ h_{3}/\left( \sqrt{%
|h_{4}|}\right) ^{3}dv\right]   \notag
\end{eqnarray}%
determined by an arbitrary generating function $\phi (x^{k},v),\phi ^{\ast
}\neq 0,$ any functions $\psi (x^{k})$ and $\omega (x^{k},y^{a})$ satisfying
the conditions
\begin{eqnarray}
\epsilon _{1}\psi ^{\bullet \bullet }+\epsilon _{2}\psi ^{^{\prime \prime }}
&=&\mathbf{\Upsilon }_{4},  \label{data2} \\
\mathbf{e}_{k}\omega =\partial _{k}\omega +w_{k}\omega ^{\ast }+n_{k}\omega
^{\diamond } &=&0,  \notag
\end{eqnarray}%
and integration functions $\ \underline{h}_{4}(x^{k}),$ $\
^{1}n_{k}(x^{i}),\ ^{2}n_{k}(x^{i})$ and $\ ^{0}h_{4}(x^{k})$ to be
determined from certain boundary conditions; in the above formulas $\epsilon
_{i}=\pm 1$ and other signs $\pm $ must be fixed such way to have a fixed
necessary signature for the chosen class of solutions.
\end{theorem}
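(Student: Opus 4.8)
The plan is to verify the listed formulas by computing the N--adapted components of the Ricci d--tensor $\widehat{\mathbf{R}}_{\beta \delta }$ of the canonical d--connection $\widehat{\mathbf{D}}$ for the ansatz (\ref{ans1}) and then integrating the system (\ref{cdeinst}) block by block. First I would set the conformal factor $\omega =1$ and substitute the coefficients $g_{i}(x^{k}),\,h_{a}(x^{k},v),\,w_{j}(x^{k},v),\,n_{j}(x^{k},v)$ into the explicit coefficient formulas for $\widehat{\mathbf{D}}$ collected in the Appendix (\ref{candcon}). Because the d--metric is block diagonal in the N--adapted frame, with $g_{i}$ depending only on $x^{k}$ and $h_{a}$ independent of $y^{4}$ (so that $h_{a}^{\ast }\neq 0$ is the only nondegeneracy needed), most curvature components drop out and the survivors arrange into a triangular hierarchy: a purely two--dimensional operator on $(g_{1},g_{2})$, one nonlinear $\partial _{v}$--relation between $h_{3}$ and $h_{4}$, and two linear families fixing the N--coefficients $w_{i}$ and $n_{i}$ from the requirement that the mixed components $\widehat{R}_{3i},\widehat{R}_{4i}$ vanish against the diagonal source (\ref{source}).

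Next I would solve the two scalar blocks. The h--indexed equations $\widehat{R}_{\ 1}^{1}=\widehat{R}_{\ 2}^{2}$ involve only $g_{1},g_{2}$, and with the substitution $g_{i}=\epsilon _{i}e^{\psi (x^{k})}$ they collapse to the two--dimensional Poisson equation $\epsilon _{1}\psi ^{\bullet \bullet }+\epsilon _{2}\psi ^{\prime \prime }=\mathbf{\Upsilon }_{4}$ recorded in (\ref{data2}), which fixes $\psi$ hence $g_{i}$ for any admissible $\mathbf{\Upsilon }_{4}(x^{k})$. The v--indexed equations $\widehat{R}_{\ 3}^{3}=\widehat{R}_{\ 4}^{4}$ form a single nonlinear second--order ODE in $v$ coupling $h_{3}$ and $h_{4}$ with source $\mathbf{\Upsilon }_{2}$. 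The decisive move is to introduce the generating function $\phi (x^{k},v)$ through $h_{3}=\big[(\sqrt{|h_{4}|})^{\ast }\big]^{2}e^{-2\phi }$; using $h_{a}^{\ast }\neq 0$ this substitution linearizes the equation (schematically $(e^{2\phi })^{\ast }=\mathbf{\Upsilon }_{2}\,h_{4}^{\ast }$ up to the expected factor), so that one integration in $v$ gives $h_{4}=\underline{h}_{4}(x^{k})\pm \int [\mathbf{\Upsilon }_{2}]^{-1}(e^{2\phi })^{\ast }\,dv$ as in (\ref{data1}), after which $h_{3}$ follows from its defining relation.

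I would then treat the remaining N--coefficients and restore $\omega$. Once $\phi$ is known, the equations $\widehat{R}_{3i}=0$ become algebraic (first order) in $w_{i}$ and give $w_{1}=\phi ^{\bullet }/\phi ^{\ast },\ w_{2}=\phi ^{\prime }/\phi ^{\ast }$; the equations $\widehat{R}_{4i}=0$ are linear and second order in $v$ for $n_{i}$ and integrate twice to yield $n_{k}={}^{1}n_{k}(x^{i})+{}^{2}n_{k}(x^{i})\int h_{3}/(\sqrt{|h_{4}|})^{3}\,dv$, with the integration functions fixed by boundary data. Finally, reinstating $\omega (x^{k},v,y^{4})$ and checking which curvature components it can disturb, one finds the single compatibility condition that $\omega$ be constant along the N--elongated h--directions, i.e. $\mathbf{e}_{k}\omega =\partial _{k}\omega +w_{k}\omega ^{\ast }+n_{k}\omega ^{\diamond }=0$, which is the second line of (\ref{data2}).

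The main obstacle is the decoupling itself together with the integration of the v--block. The entire scheme works only because the Ricci d--tensor of $\widehat{\mathbf{D}}$---in contrast to that of $\nabla$---produces the triangular structure above, and establishing this separation honestly requires the explicit Appendix coefficients; this is precisely where using the canonical d--connection rather than the Levi--Civita connection is essential. Within that structure the genuinely nonlinear content is the one ODE relating $h_{3}$ and $h_{4}$, and the crux is recognizing that the generating--function substitution $h_{3}=\big[(\sqrt{|h_{4}|})^{\ast }\big]^{2}e^{-2\phi }$ is exactly what renders it integrable by quadratures. Once this is in place, the residual work is the linear integrations for $w_{i},n_{i}$ and the verification that the listed integration functions $\underline{h}_{4},{}^{1}n_{k},{}^{2}n_{k}$ exhaust the freedom consistent with the prescribed signature (the choices of $\epsilon _{i}$ and the $\pm$ signs) and the given source.
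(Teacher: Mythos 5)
Your proposal follows essentially the same route as the paper: the paper defers the proof to the explicit N--adapted Ricci coefficients of Appendix B (the ``decoupling'' theorem with $\omega=1$ and its corollary introducing $\phi=\ln|h_{4}^{\ast}/\sqrt{|h_{3}h_{4}|}|$), from which the $\psi$--Poisson equation, the quadrature for $h_{4}$, the algebraic equations for $w_{i}$, the linear second--order $v$--equations for $n_{k}$, and the condition $\mathbf{e}_{k}\omega=0$ are read off exactly as you describe. Your identification of the triangular structure and of the generating--function substitution as the crux matches the paper's argument, so the proposal is correct and not a genuinely different proof.
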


The solutions defined by data (\ref{data1}) and (\ref{data2}) are very
general ones, with generic off--diagonal metrics when theirs coefficients
depend on all coordinates $u^{\alpha }=(x^{k},y^{a}).$ They may be of
arbitrary smooth class and with possible singularities, nontrivial
topological configurations which depends on the type of prescribed
symmetries (singularities etc) for generating and integration functions. For
such generic nonlinear systems, we can not state in general form any
uniqueness black hole theorems etc. Certain physical interpretation for some
classes of solutions can be provided only for special classes of
parametrization, see examples in section \ref{s5}.

We can prescribe such nonholonomic distributions with conventional 2+2
splitting when the general solutions from Theorem \ref{thgs} are restricted
to generate generic off--diagonal solutions in GR.

\begin{corollary}
\label{corol1} A d--metric (\ref{ans1}) with coefficients (\ref{data1}) and (%
\ref{data2}) define exact solutions of the Einstein equations for the
LC--connection $\nabla $ if the generating and integration functions are
constrained to satisfy the conditions%
\begin{equation}
w_{i}^{\ast }=\mathbf{e}_{i}\ln |h_{4}|,\mathbf{e}_{k}w_{i}=\mathbf{e}%
_{i}w_{k},n_{i}^{\ast }=0,\partial _{i}n_{k}=\partial _{k}n_{i}.
\label{lccond}
\end{equation}
\end{corollary}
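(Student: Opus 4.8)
The plan is to exploit the equivalence established in the reformulation of the Einstein equations above (the theorem containing (\ref{cdeinst}) and (\ref{lcconstr})): a metric solving (\ref{cdeinst}) for the canonical d--connection $\widehat{\mathbf{D}}$ also solves the Einstein equations for the Levi--Civita connection $\nabla$ precisely when the supplementary constraints (\ref{lcconstr}) hold, since these force the canonical d--torsion (\ref{dtors}) and the associated distortion to vanish, so that $\widehat{\mathbf{D}}$ and $\nabla$ share the same N--adapted coefficients. Because the data (\ref{data1}) and (\ref{data2}) already furnish, by Theorem \ref{thgs}, a solution of (\ref{cdeinst}), it suffices to re--express the three conditions $\widehat{L}_{aj}^{c}=e_{a}(N_{j}^{c})$, $\widehat{C}_{jb}^{i}=0$, $\Omega_{\ ji}^{a}=0$ for the specific ansatz (\ref{ans1}), in which $N_{i}^{3}=w_{i}$, $N_{i}^{4}=n_{i}$ and the v--part of the d--metric is diagonal, and to check that they collapse to (\ref{lccond}).

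First I would treat the integrability condition $\Omega_{\ ji}^{a}=0$. Using $\Omega_{ij}^{a}=\mathbf{e}_{j}(N_{i}^{a})-\mathbf{e}_{i}(N_{j}^{a})$ with the N--elongated operators $\mathbf{e}_{i}=\partial_{i}-w_{i}\partial_{3}-n_{i}\partial_{4}$, the vanishing of $\Omega_{\ ji}^{3}$ gives directly $\mathbf{e}_{k}w_{i}=\mathbf{e}_{i}w_{k}$, the second relation in (\ref{lccond}). Since $n_{i}$ depends only on $(x^{k},v)$, once the condition $n_{i}^{\ast}=0$ is in force the $\partial_{3}$-- and $\partial_{4}$--terms in $\mathbf{e}_{j}(n_{i})$ both drop out, so $\Omega_{\ ji}^{4}=0$ reduces to the curl--free relation $\partial_{i}n_{k}=\partial_{k}n_{i}$, the fourth relation in (\ref{lccond}).

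It remains to extract $w_{i}^{\ast}=\mathbf{e}_{i}\ln|h_{4}|$ and $n_{i}^{\ast}=0$ from the two connection conditions, and this is the step I expect to be the main obstacle. Here I would substitute the explicit canonical d--connection coefficients (\ref{candcon}) for the diagonal v--metric $h_{3},h_{4}$ of (\ref{ans1}) into $\widehat{L}_{aj}^{c}=e_{a}(N_{j}^{c})$ and $\widehat{C}_{jb}^{i}=0$. The mixed blocks $\widehat{L}_{bi}^{a}$ and $\widehat{C}_{jb}^{i}$ carry the $y^{3}$-- and $y^{4}$--derivatives of $h_{a}$ and of the N--coefficients, and the content of the paper—that beyond the components governed by $\Omega$ the only surviving torsion pieces are $\widehat{T}_{\ bi}^{a}$ and $\widehat{T}_{\ ja}^{i}$ in (\ref{dtors})—is that requiring these coefficients to match their holonomic values is jointly equivalent to $w_{i}^{\ast}=\mathbf{e}_{i}\ln|h_{4}|$ and $n_{i}^{\ast}=0$. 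The delicate points are bookkeeping the N--elongated rather than partial derivatives throughout, and verifying that the conformal factor $\omega$ introduces no obstruction; the latter is guaranteed by the auxiliary equation $\mathbf{e}_{k}\omega=0$ in (\ref{data2}), which makes $\omega$ constant along the N--adapted h--directions and hence contributes no extra terms to the torsion constraints. Collecting the four relations yields exactly (\ref{lccond}), and since they only restrict the generating and integration functions of Theorem \ref{thgs} without over--determining the system, such $\nabla$--solutions exist.
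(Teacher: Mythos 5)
Your proposal follows essentially the same route as the paper: the paper's proof likewise consists of observing that the constraints (\ref{lccond}) force the conditions (\ref{lcconstr}), hence the vanishing of the canonical d--torsion (\ref{dtors}), so that the earlier equivalence theorem for $\widehat{\mathbf{D}}$ and $\nabla$ applies; the paper simply labels this reduction ``straightforward computations'' without displaying them. Your explicit treatment of $\Omega_{\ ji}^{a}=0$ and of the role of $\mathbf{e}_{k}\omega =0$ supplies detail the paper omits, while the step you flag as delicate (extracting $w_{i}^{\ast }=\mathbf{e}_{i}\ln |h_{4}|$ and $n_{i}^{\ast }=0$ from $\widehat{L}_{aj}^{c}=e_{a}(N_{j}^{c})$ and $\widehat{C}_{jb}^{i}=0$) is exactly the part the paper also leaves implicit.
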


\begin{proof}
By straightforward computations we can verify that if the constraints (\ref%
{lccond}) are solved the conditions (\ref{lcconstr}) are satisfied, i.e. the
torsion (\ref{dtors}) of $\widehat{\mathbf{D}}$ is zero. $\square $
\end{proof}

\vskip5pt

We note that if $\omega =const$ the generating solutions are with Killing
symmetry because the ansatz for metrics does not depend on variable $y^{4}.$
Such subclasses of off--diagonal solutions are also very general ones.

Finally, it should be emphasized that a similar \ theorem can be formulated
for vacuum solutions with $\mathbf{\Upsilon }_{\ \delta }^{\alpha }=0,$ see
details in Refs. \cite{vex2,vex3}. A limit $\mathbf{\Upsilon }_{\ \delta
}^{\alpha }\rightarrow 0$ may be a not smooth one on $\lambda $ for such
generic nonlinear solutions. For simplicity, in this work we shall consider
only Einstein spaces with nontrivial cosmological constant $\lambda ,$ $%
\mathbf{\Upsilon }_{\ \delta }^{\alpha }\rightarrow \lambda $ $\delta _{\
\delta }^{\alpha }$. \ The conditions (\ref{lccond}) have nontrivial
solutions, see Corollary \ref{corolap1}.

\section{Hidden Symmetries for Generic Off--Diagonal Solutions}

\label{s4} Following the conditions of Theorem \ref{thgs}, the Einstein
equations can be solved in very general forms if the coefficients of a
generic off--diagonal metric are defined by some data (\ref{data1}), (\ref%
{data2}) and (\ref{lccond}). In this section, we construct in explicit form
some classes of solutions describing nonolonomic deformations of the metrics
for Kerr--Sen black holes, analyze possible solitonic deformations and study
the associated hidden nonholonomic symmetries. Finally we shall study the
hidden symmetries of nonholonomic deformations of Kerr--Sen metrics.

\subsection{Nonholonomic analogous Kerr--Sen black holes in GR}

Let us consider an ansatz for a "primary" metric
\begin{eqnarray}
\ ^{\circ }\mathbf{g} &=&e^{\Phi }(\ ^{b}\rho )^{2}\left( \
^{0}S^{-1}dr\otimes dr+d\theta \otimes d\theta \right) +  \label{auxm1} \\
&&e^{\Phi }(\ ^{b}\rho )^{-2}\{a^{2}\sin ^{2}\theta \ e^{3}\otimes e^{3}-\
^{0}Se^{4}\otimes e^{4}\},  \notag \\
\mbox{ where } e^{3} &=&\delta t-\frac{r^{2}-2(M-b)r+a^{2}}{a^{2}}\delta
\varphi ,\ e^{4}=\delta t-a\sin ^{2}\theta \delta \varphi ,  \notag \\
\delta \varphi &=&d\varphi +w_{i}(u)dx^{i},\ \delta t=dt+n_{i}(u)dx^{i},
\notag
\end{eqnarray}%
with local coordinates $u^{\alpha }=(x^{1}=r,x^{2}=\theta ,y^{3}=\varphi
,y^{4}=t),$ for some given functions and constants
\begin{eqnarray*}
\widehat{\Phi } &=&\widehat{\Phi }(x^{i}),w_{k}=w_{k}(u^{\beta
}),n_{k}=n_{k}(u^{\beta }),\rho ^{2}=r^{2}+a^{2}\cos ^{2}\theta , \\
\ ^{b}\rho &=&\rho ^{2}(x^{i})+2br,\ ^{0}S(r)=r^{2}-2(M-b)r+a^{2}, \\
M &=&const,a=const,b=const.
\end{eqnarray*}%
In general, such a metric (\ref{auxm1}) is not a solution of the Einstein
equations. If we chose a respective class of coefficients for the metric
ansatz and matter fields,
\begin{eqnarray}
w_{i} &=&n_{i}=0,\widehat{\Phi }=2\ln (\rho /\ ^{b}\rho ),\ A=-rQ(\ ^{b}\rho
)^{-2}(dt-a\sin ^{2}\theta d\varphi ),  \label{data3} \\
H &=&2ab(\ ^{b}\rho )^{-4}d\varphi \wedge dt\wedge \left[ (r^{2}-a^{2}\cos
^{2}\theta )\sin ^{2}\theta dr-r\ ^{0}S\sin 2\theta d\theta \right] ,  \notag
\end{eqnarray}%
we generate the Kerr--Sen black hole solutions in the low--energy string
theory given by the effective action
\begin{equation}
I=\int \sqrt{|\ ^{\circ }\mathbf{g}|}d^{4}u\ e^{\Phi }\left( \ _{\mid
}^{\circ }R-\frac{1}{12}H_{\alpha \beta \gamma }H^{\alpha \beta \gamma }+\
^{\circ }\mathbf{g}^{\alpha \beta }\partial _{\alpha }\Phi \partial _{\beta
}\Phi -\frac{1}{8}F_{\alpha \beta }F^{\alpha \beta }\right) ,  \label{actks}
\end{equation}%
where $\ _{\mid }^{\circ }R$ is the Ricci scalar determined by the
LC--connection $\ ^{\circ }\nabla $ of $\ ^{\circ }\mathbf{g}_{\alpha \beta
},$ i.e. the metric in the string frame; $\widehat{\Phi }$ is the dilaton
field; $F=dA$ is the Maxwell field and $H=dB-\frac{1}{4}A\wedge dA$ is the
3--form for the skew--symmetric torsion field $H_{\alpha \beta \gamma }$
determined by an antisymmetric tensor field $B_{\alpha \beta }$ in string
gravity (see details in \cite{sen,blaga,wucai,houri}). The data (\ref{data3}%
) describe a black hole with mass $M$ and angular momentum $J=Ma,$ when a
nontrivial charge $Q$ induces a magnetic dipole \ momentum $\mu =Qa.$ When
the twist parameter $b=Q^{2}/2M$ is zero, we get the Kerr black hole known
in GR. In the so called Einstein frame metric when $\ ^{E}\mathbf{g}=e^{-%
\widehat{\Phi }}\ ^{\circ }\mathbf{g}$, the solutions present certain
fundamental string modifications of the Kerr geometry and possess some
inherit properties and hidden symmetries.

Our goal is to construct a new class of solutions in GR which will
generalize the Kerr metric as certain nonholonomic Kerr--Sen spacetimes when
some nontrivial N--connection coefficients $w_{k}=w_{k}(u^{\beta })$ and $%
n_{k}=n_{k}(u^{\beta })$ approximate off--diagonal gravitational
contributions of metrics which are similar to $H$ and $\widehat{\Phi }$
fields. Namely, we shall demonstrate that such solutions possess hidden
nonholonomic symmetries with canonical d--torsion occurring in the theory
for more general classes of frame deformations but vanishing if the
nonholonomic structure is correspondingly re--defined. For the reason to
understand the physical properties of nonholonomic modifications of
Kerr--Sen solutions for GR, compare with their analogs in the string frame
and show how geometric method can be applied for constructing exact
solutions in gravity theories, our study will be mainly concentrated on
ellipsoidal and solitonic deformation.

The ansatz (\ref{auxm1}) can be expressed equivalently in a diagonal form
\begin{equation}
\ ^{\circ }\mathbf{g}=\ ^{\circ }g_{1}\ dr\otimes dr+\ ^{\circ }g_{3}d\theta
\otimes d\theta +\ ^{\circ }h_{3}\ \ d\varphi \otimes d\varphi +\ ^{\circ
}h_{4}\ dt\otimes dt,  \label{auxm1a}
\end{equation}%
for a new system of coordinates
\begin{equation}
\widetilde{u}^{\beta }=(\widetilde{x}^{1}=\int \sqrt{|\ ^{0}S|}^{-1}(r)dr,%
\widetilde{x}^{2}=\theta ,\widetilde{y}^{3}=\varphi ,\widetilde{y}^{4}=t ),
\label{coordi}
\end{equation}%
where
\begin{eqnarray}
\ ^{\circ }g_{1}(r,\theta ) &=&\ ^{\circ }g_{2}(r,\theta )=e^{\Phi }(\
^{b}\rho )^{2},\ ^{\circ }h_{4}(r,\theta )=\ ^{1}K+\ ^{2}K,  \label{datai} \\
\ ^{\circ }h_{3}(r,\theta ) &=&\beta ^{2}(\ ^{2}K)(1+\ ^{2}K/\
^{1}Kq^{2})^{2}  \notag \\
\mbox{ for }\beta &=&-a\sin ^{2}\theta ,\ q=-(r^{2}+2br+a^{2})/a,  \notag \\
\ ^{1}K &=&-\ ^{0}S(\ ^{b}\rho )^{-2},\ \ ^{2}K=a^{2}\sin ^{2}\theta (\
^{b}\rho )^{-2}e^{\Phi }.  \notag
\end{eqnarray}

At the next step, we consider a nonholonomic deformation of the above metric
\begin{eqnarray}
\ ^{\circ }\mathbf{g} &= &[\ ^{\circ }g_{i},\ ^{\circ }h_{a},\ ^{\circ
}N_{i}^{a\,}=0]\rightarrow  \label{nhdef} \\
\ ^{\eta }\mathbf{g} &\mathbf{=}&[\ ^{\eta }g_{i}=\eta _{i}\ ^{\circ
}g_{i},\ ^{\eta }h_{a}=\eta _{a}\ ^{\circ }h_{a},\ \ ^{\eta
}N_{i}^{3\,}=w_{i},\ \ ^{\eta }N_{i}^{4\,}=n_{i}]  \notag
\end{eqnarray}%
for corresponding gravitational ''polarization'' functions
\begin{equation}
\eta _{\alpha }=1+\chi _{\alpha }=(\eta _{i}=1+\chi _{i}(x^{k}),\eta
_{a}=1+\chi _{a}(x^{k},\varphi ))  \label{polf}
\end{equation}%
and off--diagonal (N--connection) coefficients $\widetilde{w}%
_{i}(x^{k},\varphi ))$ and $\widetilde{n}_{i}(x^{k},\varphi )),$ when
\begin{equation*}
\ d\varphi \rightarrow \mathbf{e}^{3}=d\varphi +\widetilde{w}_{i}d\widetilde{%
x}^{i},\ dt\rightarrow \mathbf{e}^{4}=dt+\widetilde{n}_{i}d\widetilde{x}^{i}.
\end{equation*}%
Such a ''target'' d--metric $\ ^{\eta }\mathbf{g}$ is supposed to be generic
off--diagonal of type (\ref{ans1}) (equivalently (\ref{ans1a})), with
respective coefficients chosen in the forms (\ref{data1}) and (\ref{data2}),
and must define exact solutions of the Einstein equations for the canonical
d--connection and, for the corresponding restrictions, for the
LC--connection, when the source $\mathbf{\Upsilon }_{\ \delta }^{\alpha
}\rightarrow \lambda $ $\delta _{\ \delta }^{\alpha },$ see formula (\ref%
{source}) and the conditions stated in Lemma \ref{lem}, Theorem \ref{thgs}
and Corollary \ref{corol1}.

For some classes of nonholonomic deformations, we can consider that some $%
\chi _{\alpha }$ in polarizations (\ref{polf}) are small values, $|\chi
_{\alpha }|<1,$ which allows us to provide certain physical interpretation
of such classes of solutions to be very similar to that for the Kerr--Sen
metrics but (in our case) in GR, with some off--diagonal modifications, let
say, with ellipsoid and/or solitonic symmetries.

Using the results provided in Corollary \ref{corolap1} and Remark \ref%
{remarka}, we prove

\begin{theorem}
\label{theorsmp}The set of stationary solutions for (\ref{nhdef}) defining
nonholonomic deformations of a originating from string gravity d--metric (%
\ref{auxm1}) into generic off--diagonal metrics for Einstein spacetimes in
GR are parametrized by gravitational polarizations
\begin{eqnarray}
\eta _{i} &=&1+\chi _{i}=e^{\psi (\widetilde{x}^{i})},\eta _{i}=1+\chi _{a}
\label{defsol} \\
w_{i} &=&(\phi ^{\ast })^{-1}\frac{\partial \phi }{\partial \widetilde{x}^{i}%
},\ n_{k}=\ ^{1}n_{k}(\widetilde{x}^{i}),  \notag
\end{eqnarray}%
generated by a function $\chi _{4}(\widetilde{x}^{i},\varphi ),$ for $\chi
_{4}^{\ast }\neq 0,$ where $\phi =\ln \sqrt{|\lambda \ ^{\circ }h_{4}\chi
_{4}|}$ and $\chi _{3}=-1+(\lambda \ ^{\circ }h_{3})^{-1}\left[ \left( \ln
\sqrt{|1-\chi _{4}|}\right) ^{\ast }\right] ^{2}$ (\ref{aux4a}) with $\psi (%
\widetilde{x}^{i})$ being a solution \newline
of $[\frac{\partial ^{2}}{(\partial \widetilde{x}^{1})^{2}}+\frac{\partial
^{2}}{(\partial \widetilde{x}^{2})^{2}}][\psi (\widehat{\Phi }+2\ln |\
^{b}\rho |)]=\lambda $ and $\ \partial (\ ^{1}n_{k})/\partial \widetilde{x}%
^{i}=\partial (\ ^{1}n_{i})/\partial \widetilde{x}^{k}.$
\end{theorem}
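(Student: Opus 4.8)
The plan is to reduce the statement to a direct application of Theorem \ref{thgs} with the constant source $\mathbf{\Upsilon}_{\ \delta}^{\alpha}\rightarrow\lambda\,\delta_{\ \delta}^{\alpha}$, carried out in the adapted coordinates (\ref{coordi}). First I would put the primary metric (\ref{auxm1}) into the diagonal form (\ref{auxm1a}); the purpose of the coordinate $\widetilde{x}^{1}=\int\sqrt{|\ ^{0}S|}^{-1}dr$ is precisely to render the horizontal block conformal, $\ ^{\circ}g_{1}=\ ^{\circ}g_{2}=e^{\widehat{\Phi}}(\ ^{b}\rho)^{2}$ as in (\ref{datai}), so that the target horizontal metric $g_{i}=\eta_{i}\ ^{\circ}g_{i}$ carries a single conformal factor matching $g_{i}=\epsilon_{i}e^{\psi}$ in (\ref{data1}). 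Writing $\eta_{i}=e^{\psi}$ and using $\ ^{\circ}g_{i}=e^{\widehat{\Phi}+2\ln|\ ^{b}\rho|}$, the Poisson equation $\epsilon_{1}\psi^{\bullet\bullet}+\epsilon_{2}\psi^{\prime\prime}=\lambda$ of (\ref{data2}) turns into the two--dimensional equation for the combination $\psi+\widehat{\Phi}+2\ln|\ ^{b}\rho|$ shown in the statement, which then fixes $\psi$. The rest of the argument is a coefficient--by--coefficient identification of the deformation ansatz (\ref{nhdef})--(\ref{polf}) with the general solution formulas (\ref{data1})--(\ref{data2}).

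For the vertical block I would take $\chi_{4}(\widetilde{x}^{i},\varphi)$ as the free generating datum and \emph{define} $\phi=\ln\sqrt{|\lambda\ ^{\circ}h_{4}\chi_{4}|}$. Because $\Upsilon_{2}=\lambda$ is constant and $\ ^{\circ}h_{4}$ is independent of $\varphi$, the integral in the $h_{4}$--line of (\ref{data1}) telescopes: $\int\lambda^{-1}(e^{2\phi})^{\ast}dv=\ ^{\circ}h_{4}\chi_{4}$ up to the integration constant, so with $\underline{h}_{4}=\ ^{\circ}h_{4}$ and the appropriate sign one obtains $h_{4}=\ ^{\circ}h_{4}(1+\chi_{4})=\eta_{4}\ ^{\circ}h_{4}$, i.e. $\eta_{4}=1+\chi_{4}$ as required. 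Feeding this $h_{4}$ into the $h_{3}$--line of (\ref{data1}) and dividing by $\ ^{\circ}h_{3}$ produces the algebraic expression for $\chi_{3}$ recorded in (\ref{defsol}) and (\ref{aux4a}), while the coefficients $w_{i}=\partial_{i}\phi/\phi^{\ast}$ and the $n_{k}$--integral are inherited verbatim from (\ref{data1}).

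The step I expect to be the main obstacle is imposing the Levi--Civita restrictions of Corollary \ref{corol1}, above all $w_{i}^{\ast}=\mathbf{e}_{i}\ln|h_{4}|$ together with $\mathbf{e}_{k}w_{i}=\mathbf{e}_{i}w_{k}$. After the previous steps both sides are already determined by the single datum $\chi_{4}$ --- $w_{i}$ through $\phi$ and $h_{4}$ through $h_{4}=\eta_{4}\ ^{\circ}h_{4}$ --- so these are genuine compatibility conditions rather than definitions, and the real work is to show that the class of admissible $\chi_{4}$ is nonempty and yields a nondegenerate metric of the prescribed signature (in particular that $w_{i}$ can be realized as a gradient, so that $\mathbf{e}_{k}w_{i}=\mathbf{e}_{i}w_{k}$ holds). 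This is exactly the content invoked from Corollary \ref{corolap1} and Remark \ref{remarka}, and I would quote them to exhibit the admissible generating data instead of re--deriving the constraints. By contrast the $n$--sector conditions $n_{i}^{\ast}=0$ and $\partial_{i}n_{k}=\partial_{k}n_{i}$ are immediate: they remove the second integral term of $n_{k}$, leaving $n_{k}=\ ^{1}n_{k}(\widetilde{x}^{i})$ with $\partial(\ ^{1}n_{k})/\partial\widetilde{x}^{i}=\partial(\ ^{1}n_{i})/\partial\widetilde{x}^{k}$, as stated.
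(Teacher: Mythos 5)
Your proposal is correct and follows essentially the same route as the paper: the paper's own justification of Theorem \ref{theorsmp} is precisely to specialize Theorem \ref{thgs} with $\mathbf{\Upsilon}_{\ \delta}^{\alpha}=\lambda\delta_{\ \delta}^{\alpha}$ in the coordinates (\ref{coordi}), rewrite the data (\ref{data1})--(\ref{data2}) through the polarizations as in Remark \ref{remarka} (where the telescoping $\int\lambda^{-1}(e^{2\phi})^{\ast}dv=\pm\ ^{\circ}h_{4}\chi_{4}$ and the resulting expressions for $\phi$ and $\chi_{3}$ appear verbatim), and invoke Corollary \ref{corolap1} for the nontriviality of the Levi--Civita constraints (\ref{lccond}). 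Your identification of which conditions are definitions and which are genuine compatibility constraints on $\chi_{4}$ matches the paper's (rather terse) argument, modulo the paper's own sign ambiguities in $\pm$ and in $1\mp\chi_{4}$.
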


The set of solutions (\ref{defsol}) is defined for any $\lambda \neq 0$ (in
a similar form, it is possible to generate vacuum solutions, see details in
\cite{vex2,vex3}; the lengths of this paper does not allow us to consider
such metrics).

\subsection{Ellipsoidal deformations and gravitational solitons}

\label{sssolit}

\subsubsection{Rotoid deformations}

We can chose a class of noholonomic deformations when data (\ref{defsol})
define a black ellipsoid solution in GR (such solutions were studied also in
(non) commutative and/or string/brane models of gravity; they seem to be
stable and, for Einstein configurations, do not violate the conditions of
black hole uniqueness theorems, see details in \cite{vsingl1,vex2,vex3} and
references therein).

The generating function (gravitational polarization) is chosen $\chi
_{4}=-1+(\ ^{\circ }h_{4})^{-1}\left( \underline{q}+\varepsilon \underline{%
\varrho }\right) $, where
\begin{equation}
\underline{q}(\widetilde{x}^{1},\theta ,\varphi )=1-\frac{2\ ^{1}\underline{%
\mu }(r,\theta ,\varphi )}{r},\ \ \underline{\varrho }(\widetilde{x}%
^{1},\theta ,\varphi )=\frac{\underline{q}_{0}(r)}{4\underline{\mu }_{0}^{2}}%
\sin (\omega _{0}\varphi +\varphi _{0}).  \label{ellipsv}
\end{equation}%
Following Theorem \ \ref{theorsmp}, there is a class of exact solutions (for
any fixed parameter $\varepsilon $)
\begin{eqnarray}
~_{\lambda }^{rot}\mathbf{g} &=&e^{\psi (\widetilde{x}^{1},\theta )}\
^{\circ }g_{1}\left( d\widetilde{x}^{1}\otimes d\widetilde{x}^{1}+\ d\theta
\otimes d\theta \right)+  \notag \\
&&\lambda ^{-1}\left[ \left( \ln \sqrt{|2-\underline{q}-\varepsilon
\underline{\varrho }|}\right) ^{\ast }\right] ^{2}\delta \varphi \otimes \
\delta \varphi -\left( \underline{q}+\varepsilon \underline{\varrho }\right)
\ \delta \widetilde{t}\otimes \delta \widetilde{t},  \notag \\
\delta \varphi &=&d\varphi +w_{1}d\widetilde{x}^{1}+w_{2}d\theta ,\ \delta
t=dt+n_{1}d\widetilde{x}^{1}+n_{2}d\theta .  \label{soladel}
\end{eqnarray}%
Such generic off--diagonal (anisotropic) stationary metrics posses rotoid
symmetry with ''ellipsoidal horizon'' when the condition of vanishing of the
metric coefficient before $\delta t\otimes \delta t,$ i.e. $\ h_{4}=0,$
states a parametric elliptic configuration $\ $of type $r_{+}\simeq 2\ ^{1}%
\underline{\mu }/\left( 1+\varepsilon \frac{\underline{q}_{0}(r)}{4%
\underline{\mu }_{0}^{2}}\sin (\omega _{0}\varphi +\varphi _{0})\right) ,$
for a corresponding $\underline{q}_{0}(r).$ The parameter $\varepsilon $ is
just the eccentricity for a rotating ellipsoid.

For small values of $\varepsilon \geq 0,$ a metric (\ref{soladel}) describes
nonholonomic black ellipsoid -- de Sitter configurations, when the
coefficients of metrics \ contain dependencies on ''primary'' values like $\
^{\circ }g_{1}$ and $\ ^{\circ }h_{4}$ in (\ref{auxm1}) (equivalently, (\ref%
{auxm1a})). We argue that such solutions define in GR certain rotoid black
hole objects mimicking ''nonholonomically deformed'' black holes. The
physical properties of such solutions are determined by three sets of data:
the ''primary'' data for the Kerr--Sen metrics, the ellipsoidal type of
nonholonomic deformations stated by values (\ref{ellipsv}) \ for $\chi _{4},$
and a nontrivial cosmological constant $\lambda .$

In the limit $\varepsilon \rightarrow 0,$ we get a subclass of solutions
with spherical symmetry but with generic off--diagonal coefficients induced
by the N--connection coefficients, i.e. by the corresponding nonholonomic
deformations. This class of spacetimes depend on cosmological constants
which, in general, are polarized nonholonomically by nonlinear gravitational
interactions. We can extract from such configurations the Schwarzschild
solution if we select a set of functions with the properties $\phi
\rightarrow const,w_{i}\rightarrow 0,n_{i}\rightarrow 0$ and $%
h_{4}\rightarrow \varpi ^{2},$ where $\varpi ^{2}$ is chosen to determine
the horizon of a static black hole. Finally, we emphasize that the
parametric dependence on cosmological constants, for such nonholonomic
configurations, is not smooth.

\subsubsection{Kerr--Sen rotoids and solitonic distributions}

The solutions for the Kerr--Sen rotoid configurations in GR $\ ^{rot}\mathbf{%
g}$ (\ref{soladel}) can be generalized by introducing additional stationary
deformations induced by a static three dimensional solitonic distribution $%
\eta (\widetilde{x}^{1},\theta ,\varphi )$ as a solution of the solitonic
equation\footnote{$\eta $ can be a solution of any three dimensional
gravitational solitonic stationary solitonic distribution and/ or other
nonlinear wave equations if we construct exact solutions with running in
time solitons when are functions of type $\eta (t,\theta ,\varphi )$, or $%
\eta (\xi ,\theta ,t)$, see \cite{vsingl1}}
\begin{equation}
\eta ^{\bullet \bullet }+\epsilon (\eta ^{\prime }+6\eta \ \eta ^{\ast
}+\eta ^{\ast \ast \ast })^{\ast }=0,\ \epsilon =\pm 1.  \label{sol3d}
\end{equation}%
We construct and analyze two different types of solitonic nonholonomic
deformations of Kerr--Sen black holes.

\paragraph{De Sitter type solutions with 3--d solitonic polarizations of
masses:\ }

The geometric ''target'' data are generated by a ''vertical'' distribution $%
\eta (\widetilde{x}^{1},\theta ,\varphi )$ when the coefficients of
v--metric are computed in the form
\begin{equation}
h_{4}=\eta ^{\bullet \bullet }=\left( \underline{q}+\varepsilon \underline{%
\varrho }\right) ,\ h_{3}=\left[ \left( \sqrt{|h_{4}|}\right) ^{\ast }\right]
^{2}e^{-2\phi },  \notag
\end{equation}%
where \
\begin{equation}
\phi (\widetilde{x}^{1},\theta ,\varphi )=\frac{1}{2}\ln \left| \lambda %
\left[ \ ^{0}h_{4}+\epsilon \left( \eta ^{\prime }+6\eta \eta ^{\ast }+\eta
^{\ast \ast \ast }\right) ^{\ast }\right] \right| .  \label{solgfunct}
\end{equation}%
By straightforward computations, we can verify that the conditions of
Theorem \ \ref{theorsmp} with $h_{4}=\ \ ^{\circ }h_{4}\pm \lambda
^{-1}e^{2\phi }$ are satisfied if and only if $\eta $ is a solution of (\ref%
{sol3d}).

Using formula (\ref{ellipsv}), we can relate the solitonic function $\eta $
to certain gravitational polarizations of mass, when (for simplicity, we can
consider $\varepsilon =0)$
\begin{equation}
\eta ^{\bullet \bullet }(\widetilde{x}^{1},\theta ,\varphi )=1-\frac{2\ ^{1}%
\underline{\mu }(r,\theta ,\varphi )}{r}.  \label{solitpol}
\end{equation}%
Putting together the coefficients, we get the metric
\begin{eqnarray}
~_{1sol}^{rot}\mathbf{g} &=&e^{\psi (\widetilde{x}^{1},\theta )}\ ^{\circ
}g_{1}\left( d\widetilde{x}^{1}\otimes d\widetilde{x}^{1}+\ d\theta \otimes
d\theta \right)   \label{solbes} \\
&&+\left[ \left( \sqrt{|\eta ^{\bullet \bullet }|}\right) ^{\ast }\right]
^{2}e^{-2\phi }\delta \varphi \otimes \ \delta \varphi -\eta ^{\bullet
\bullet }\ \ \delta t\otimes \delta t,  \notag \\
\delta \varphi  &=&d\varphi +w_{1}(\widetilde{x}^{1},\theta ,\varphi )d%
\widetilde{x}^{1}+w_{2}(\widetilde{x}^{1},\theta ,\varphi )d\theta ,\
\notag \\
\delta t &=&dt+\ ^{1}n_{1}(\widetilde{x}^{1},\theta )d\widetilde{x}^{1}+\
^{1}n_{2}(\widetilde{x}^{1},\theta )d\theta ,  \notag
\end{eqnarray}%
where the N--connection coefficients $w_{i}=(\phi ^{\ast })^{-1}\frac{%
\partial \phi }{\partial \widetilde{x}^{i}}$ and $n_{k}=\ ^{1}n_{k}(%
\widetilde{x}^{i})$ are respectively computed using the generating function $%
\phi $ (\ref{solgfunct}) and subjected to the LC--conditions (\ref{lccond}).
$\ $

The class of generic off--diagonal metrics (\ref{solbes}) \ are similar to
the rotoid Kerr--Sen configurations in GR (\ref{soladel}). For small $%
\varepsilon ,$ such solutions define black ellipsoids with solitonically
polarized mass $\ ^{1}\underline{\mu }(r,\theta ,\varphi )=\mu
_{0}+\varepsilon \mu _{1}(r,\theta ,\varphi ),$ following (\ref{solitpol}).
We suppose that it is possible \ to detect such black rotoid objects via
anisotropic polarizations of their masses when certain gravitational
solitonic waves act on such generic off--diagonal black ellipsoid solutions.

\paragraph{Arbitrary solitonic deformations of the v--components of
d--metrics:}

We construct another class of nonholonomic solitonic transforms from $\
^{rot}\mathbf{g}$ (\ref{soladel}) to a stationary $~_{2st}^{rot}\mathbf{g}$
determining stationary metrics for a rotoid in solitonic backgrounds,
\begin{eqnarray}
~_{2st}^{rot}\mathbf{g} &=&e^{\psi (\widetilde{x}^{1},\theta )}\ ^{\circ
}g_{1}\left( d\widetilde{x}^{1}\otimes d\widetilde{x}^{1}+\ d\theta \otimes
d\theta \right) +  \label{solrot} \\
&&\left[ \left( \sqrt{|\eta \left( \underline{q}+\varepsilon \underline{%
\varrho }\right) |}\right) ^{\ast }\right] ^{2}\ \delta \varphi \otimes \
\delta \varphi -\eta \left( \underline{q}+\varepsilon \underline{\varrho }%
\right) \ \delta t\otimes \delta t,  \notag \\
\delta \varphi  &=&d\varphi +w_{1}(\widetilde{x}^{1},\theta ,\varphi )d%
\widetilde{x}^{1}+w_{2}(\widetilde{x}^{1},\theta ,\varphi )d\theta ,\
\notag \\
\delta t &=&dt+\ ^{1}n_{1}(\widetilde{x}^{1},\theta )d\widetilde{x}^{1}+\
^{1}n_{2}(\widetilde{x}^{1},\theta )d\theta .  \notag
\end{eqnarray}%
The conditions of Theorem \ref{theorsmp} and Remark \ref{remarka} are
satisfied for any nontrivial solution $\eta (\widetilde{x}^{1},\theta
,\varphi )$ of the solitonic equation (\ref{sol3d}) if we chose the
generating function
\begin{equation}
\phi =\frac{1}{2}\ln \left\vert \lambda \left[ \ ^{0}h_{4}\pm \eta \left(
\underline{q}+\varepsilon \underline{\varrho }\right) \right] \right\vert
\label{genfunct3}
\end{equation}%
for computing the values $w_{i}=(\phi ^{\ast })^{-1}\frac{\partial \phi }{%
\partial \widetilde{x}^{i}}$ and $n_{k}=\ ^{1}n_{k}(\widetilde{x}^{i})$
subjected to the LC--conditions (\ref{lccond}).

There is a substantial difference between two classes of solitonic
gravitational solutions (\ref{solbes}) and (\ref{solrot}). In the first
case, the 3-d distribution $\eta (\widetilde{x}^{1},\theta ,\varphi )$ is a
solitonic one if the Einstein equations are satisfied. In the second class,
the distribution may be a solitonic type, or another one, and such
off--diagonal metric ansatz defining Einstein spacetime manifolds do not
transform the field equations into certain solitonic ones. The generating
function $\phi (\widetilde{x}^{1},\theta ,\varphi )$ (\ref{genfunct3}) is
such way chosen that $\ _{2st}^{rot}\mathbf{g}$ (\ref{solrot}) describes a
rotoid Kerr--Sen configuration $\ ^{rot}\mathbf{g}$ (\ref{soladel}) imbedded
self--consistently into a solitonic background $\eta (\widetilde{x}%
^{1},\theta ,\varphi )$ \thinspace (\ref{sol3d}) in such forms that, for
instance, black ellipsoid properties are preserved for various types of
nonlinear waves.

\subsection{Hidden symmetries for nonholonomic Kerr--Sen metrics}

The quadratic elements for all classes of solitonic Kerr--Sen rotoid
configurations constructed in this section can be represented as
nonholonomic deformations (\ref{nhdef}) of the metric (\ref{auxm1a}) via
gravitational polarizations (\ref{polf}),%
\begin{eqnarray*}
\ \delta s^{2} &=&\ ^{\circ }g_{i}(r,\theta )\ (\sqrt{|\eta _{i}(r,\theta )|}%
d\widetilde{x}^{i})^{2}+\ ^{\circ }h_{a}(r,\theta )\ \ (\sqrt{|\eta
_{a}(r,\theta ,\varphi )|}\widetilde{\mathbf{e}}^{a})^{2}, \\
\widetilde{\mathbf{e}}^{3} &=&d\varphi +\widetilde{w}_{i}(r,\theta ,\varphi
)d\widetilde{x}^{i},\ \widetilde{\mathbf{e}}^{4}=d\varphi +\widetilde{n}%
_{i}(r,\theta ,\varphi )d\widetilde{x}^{i},
\end{eqnarray*}%
for $\widetilde{x}^{i}=(\widetilde{x}^{1}(r),\theta )$ and $\widetilde{y}%
^{a}=y^{a}=(\varphi ,t)$. The coefficients/multiples in this formula are
considered in a form when an exact solution for the Kerr--Sen and/or
ellipsoidal/solitonic configuration is defined.

Let us introduce the following basis of 1--forms:{\small
\begin{eqnarray}
\overline{\mathbf{e}}^{\mu } &=&[\overline{e}^{1}=\rho \sqrt{|\eta _{1}|}d%
\widetilde{x}^{1},\overline{e}^{2}=\sqrt{|\eta _{2}|}\rho d\theta ,\
\overline{\mathbf{e}}^{3} = (\ ^{b}\rho )^{-2}\rho a\sin \theta (\sqrt{|\eta
_{4}|}\widetilde{\mathbf{e}}^{4}+q\sqrt{|\eta _{3}|}\widetilde{\mathbf{e}}%
^{3}),  \notag \\
\overline{\mathbf{e}}^{4} &=&(\ ^{b}\rho )^{-2}\rho \sqrt{|\ ^{0}S|} (\sqrt{%
|\eta _{4}|}\widetilde{\mathbf{e}}^{4}+\beta \sqrt{|\eta _{3}|}\widetilde{%
\mathbf{e}}^{3})]  \label{dualbas}
\end{eqnarray}%
} when the target metric is written
\begin{equation}
\ ^{\eta }\mathbf{g}\mathbf{=}\overline{e}^{1}\otimes \overline{e}^{1}+%
\overline{e}^{2}\otimes \overline{e}^{2}+\overline{\mathbf{e}}^{3}\otimes
\overline{\mathbf{e}}^{3}-\overline{\mathbf{e}}^{4}\otimes \overline{\mathbf{%
e}}^{4}.  \label{dm1}
\end{equation}%
In general, the nonholonomically induced canonical d--torsion $\widehat{%
\mathcal{T}}$ (\ref{dtors}) is not completely antisymmetric. Nevertheless,
alternatively to this d--torsion, we can define another torsion field (which
is also completely determined nonholonomically by the metric structure and
parametrized by two functions before absolute anti--symmetric tensors, $%
e^{234}$ and $e^{124}$, with respect to a dual base (\ref{dualbas})),%
\begin{eqnarray*}
\ ^{\triangleleft }T^{\pm } &=&-2a\sin \theta \left[ -(\ ^{b}\rho
)^{-2}(r+b)\pm r\rho ^{-2}\right] , \\
\ ^{\triangleright }T^{\pm } &=&-2\rho ^{-1}a(\cos \theta )\sqrt{|\ ^{0}S|}%
\left[ -(\ ^{b}\rho )^{-2}\pm \rho ^{-2}\right] ,
\end{eqnarray*}%
when
\begin{equation}
\ ^{H}\mathbf{T^{\pm }:}=\ ^{\triangleright }T^{\pm }e^{234}+\
^{\triangleleft }T^{\pm }e^{124}.  \label{auxtors}
\end{equation}%
The torsions $\ ^{H}\mathbf{T}^{\pm }$ are similar to the torsions
associated to $H$ studied in Ref. \cite{houri}. Nevertheless, constructing
our solutions we have not involved additional field equations for the matter
fields and torsion (we shall write in brief $\ ^{H}\mathbf{T}$ if it will
not be not necessary to state exactly with what sign $+,$ or $-,$ we are
working for some constructions). There is not a general smooth transform $\
^{H}\mathbf{T\rightarrow }H$ because our gravitational field equations for
target metrics are not derived from an action of type (\ref{actks}). For
some classes of nonholonomic constraints on the off--diagonal gravitational
field dynamics we can mimic certain types of matter field interactions.

There are two obvious isometries $\partial _{t}$ and $\partial _{\varphi }$
for a general (non) holonomic Kerr--Sen metric. Such a geometry also admits
an irreducible Killing tensor\footnote{%
such a geometric object is responsible for separability of (charged)
Hamilton--Jacobi equation in the complete integrability of the motion of
particles; with respect to anholonomic frames, the constructions can be
generalized to include nonholonic variables and non--integrable dynamics}%
\begin{equation*}
\ ^{\eta }\overline{\mathbf{K}}=a^{2}\cos ^{2}\theta \left( \overline{%
\mathbf{e}}^{4}\overline{\mathbf{e}}^{4}-\overline{e}^{1}\overline{e}%
^{1}\right) +r^{2}(\overline{e}^{2}\otimes \overline{e}^{2}+\overline{%
\mathbf{e}}^{3}\otimes \overline{\mathbf{e}}^{3}).
\end{equation*}%
The first one can be related to the canonical d--torsion $\widehat{\mathcal{T%
}}$ (\ref{dtors}) and/or auxiliary torsion $\ ^{H}\mathbf{T}$ (\ref{auxtors}%
) in such a form when a nonholonomically induced torsion is included
naturally into a generalized closed CKY 2--form. We change $\widehat{\mathbf{%
T}}\rightarrow \ ^{H}\mathbf{T}^{+}$ and consider $\ \ ^{H}\mathbf{D}_{%
\mathbf{X}}\Psi :=\nabla _{\mathbf{X}}\Psi +\frac{1}{2}(\mathbf{X\rfloor }\
^{H}\mathbf{T}^{+})\bigwedge\limits_{1}\Psi ,$ when {\small
\begin{eqnarray*}
\ ^{H}\mathbf{d}\Psi &=&\mathbf{d}\Psi -\ ^{H}\mathbf{T}^{+}\bigwedge%
\limits_{1}\Psi =\overline{\mathbf{e}}^{\alpha }\wedge \ ^{H}\mathbf{D}%
_{\alpha }\Psi =\overline{e}^{i}\wedge \ ^{H}\mathbf{D}_{i}\Psi +\overline{%
\mathbf{e}}^{a}\wedge \ ^{H}\mathbf{D}_{a}\Psi, \\
\ ^{H}\mathbf{d}^{\ast }\Psi &=&\mathbf{d}^{\ast }\Psi -\ ^{H}\mathbf{T}%
^{+}\bigwedge\limits_{2}\Psi =-\overline{\mathbf{e}}_{\alpha }\rfloor \ ^{H}%
\mathbf{D}_{\alpha }\Psi =-\overline{\mathbf{e}}_{i}\rfloor\ ^{H}\mathbf{D}%
_{i}\Psi -\overline{\mathbf{e}}_{a}\rfloor\ ^{H}\mathbf{D}_{a}\Psi.
\end{eqnarray*}
}

We can verify explicitly that:

\begin{corollary}
On $(n+m)$--dimensional N--anholonomic (pseudo) Riemannian space, the value $%
\ ^{+}h=a\cos \theta \overline{\mathbf{e}}^{3}\wedge \overline{e}^{2}+r%
\overline{\mathbf{e}}^{4}\wedge \overline{e}^{1}$ is a closed nonholonomc
CKY 2--form obeying the conditions%
\begin{equation*}
\ ^{H}\mathbf{D}_{\mathbf{X}}(\ ^{+}h)=\mathbf{X}^{\flat }\wedge \ \ ^{H}%
\mathbf{\xi },\ \ ^{H}\mathbf{\xi =}(1-n-m)^{-1}\ ^{H}\mathbf{d}^{\ast }(\
^{+}h).
\end{equation*}
\end{corollary}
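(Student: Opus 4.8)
The plan is to verify directly the two defining properties of a torsion--modified closed CKY $2$--form: closedness, $\ ^{H}\mathbf{d}(\ ^{+}h)=0$, and the generalized CKY equation $\ ^{H}\mathbf{D}_{\mathbf{X}}(\ ^{+}h)=\mathbf{X}^{\flat}\wedge \ ^{H}\mathbf{\xi}$. Recall that a closed CKY $p$--form $\Psi$ relative to a connection with totally antisymmetric torsion satisfies $\ ^{H}\mathbf{D}_{\mathbf{X}}\Psi =-(N-p+1)^{-1}\mathbf{X}^{\flat}\wedge \ ^{H}\mathbf{d}^{\ast}\Psi$ together with $\ ^{H}\mathbf{d}\Psi =0$; for $p=2$ and $N=n+m$ this is precisely the asserted relation with $\ ^{H}\mathbf{\xi}=(1-n-m)^{-1}\ ^{H}\mathbf{d}^{\ast}(\ ^{+}h)$. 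Hence it suffices to establish these two identities for the explicit $2$--form $\ ^{+}h$ in the coframe (\ref{dualbas}).

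First I would fix the N--adapted orthonormal coframe $\overline{\mathbf{e}}^{\mu}$ of (\ref{dualbas}), in which $\ ^{\eta}\mathbf{g}$ takes the diagonal form (\ref{dm1}), and compute the exterior derivatives $\mathbf{d}\overline{\mathbf{e}}^{\mu}$, reading off through Cartan's first structure equation the Levi--Civita connection $1$--forms together with the anholonomy contributions. Since $\ ^{H}\mathbf{T}^{+}$ is already given explicitly in (\ref{auxtors}) as $\ ^{\triangleright}T^{+}e^{234}+\ ^{\triangleleft}T^{+}e^{124}$, the next step is to expand $\mathbf{d}(\ ^{+}h)$ and the contraction $\ ^{H}\mathbf{T}^{+}\bigwedge_{1}(\ ^{+}h)$ in this frame and to check that they cancel. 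This is where the specific $r$-- and $\theta$--dependence is used: the amplitudes $a\cos\theta$ and $r$ in $\ ^{+}h$ are chosen so that $\mathbf{d}(a\cos\theta\ \overline{\mathbf{e}}^{3}\wedge \overline{e}^{2})+\mathbf{d}(r\ \overline{\mathbf{e}}^{4}\wedge \overline{e}^{1})$ is compensated precisely by the two torsion amplitudes $\ ^{\triangleright}T^{+}$ and $\ ^{\triangleleft}T^{+}$, yielding $\ ^{H}\mathbf{d}(\ ^{+}h)=0$.

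Next I would compute $\ ^{H}\mathbf{d}^{\ast}(\ ^{+}h)=-\overline{\mathbf{e}}_{\alpha}\rfloor \ ^{H}\mathbf{D}_{\alpha}(\ ^{+}h)$ to identify the $1$--form $\ ^{H}\mathbf{\xi}$, and then evaluate, component by component, $\ ^{H}\mathbf{D}_{\overline{\mathbf{e}}_{\alpha}}(\ ^{+}h)=\nabla_{\overline{\mathbf{e}}_{\alpha}}(\ ^{+}h)+\frac{1}{2}(\overline{\mathbf{e}}_{\alpha}\rfloor \ ^{H}\mathbf{T}^{+})\bigwedge_{1}(\ ^{+}h)$. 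The claim then reduces to showing that for every index $\alpha$ the outcome equals $\overline{\mathbf{e}}^{\alpha}\wedge \ ^{H}\mathbf{\xi}$ with the single $1$--form $\ ^{H}\mathbf{\xi}$ found above, and that the overall scalar prefactor collapses to $(1-n-m)^{-1}=-1/3$ in the present $n=m=2$ case. I expect this to follow by the same cancellation pattern as for the holonomic string--frame Kerr--Sen geometry in \cite{houri}: the nonholonomically induced $\ ^{H}\mathbf{T}$ has been constructed via (\ref{auxtors}) to mimic the $H$--flux, so the terms not of the form $\mathbf{X}^{\flat}\wedge(\text{fixed }1\text{--form})$ drop out exactly as there.

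The main obstacle will be the bookkeeping of the connection coefficients of the deformed metric: because $\rho$, $\ ^{b}\rho$, $\ ^{0}S$ and the polarizations $\eta_{\alpha}$ depend on $r,\theta$ (and on $\varphi$ through $\eta_{3},\eta_{4}$), the raw $\nabla$--derivatives of $\ ^{+}h$ generate many terms proportional separately to $\overline{e}^{1},\overline{e}^{2},\overline{\mathbf{e}}^{3},\overline{\mathbf{e}}^{4}$. The decisive point is to show that all contributions not of the form $\mathbf{X}^{\flat}\wedge(\text{fixed }1\text{--form})$ are annihilated once the torsion piece $\frac{1}{2}(\overline{\mathbf{e}}_{\alpha}\rfloor \ ^{H}\mathbf{T}^{+})\bigwedge_{1}(\ ^{+}h)$ is added; equivalently, that the deformation functions $\eta_{\alpha}$ enter only through a conformal--type rescaling that the closed CKY relation already accommodates. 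Establishing this cancellation---rather than the closedness, which is the lighter computation---is the crux of the argument.
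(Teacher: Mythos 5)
Your plan is the same one the paper follows: the paper offers no written proof beyond the assertion ``We can verify explicitly that,'' i.e.\ a direct check of closedness and of the torsion--modified CKY equation for $\ ^{+}h$ in the coframe (\ref{dualbas}), modeled on the holonomic Kerr--Sen computation of \cite{houri}, which is precisely what you outline, including the correct reduction of the normalization to $(1-n-m)^{-1}=-1/3$ for $n=m=2$. You also correctly identify the genuine crux --- that the polarization functions $\eta_{\alpha}$ must drop out of the non-$\mathbf{X}^{\flat}\wedge\ ^{H}\mathbf{\xi}$ terms once the torsion contribution is added --- which the paper itself leaves unverified.
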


If both conditions $\widehat{\mathbf{Z}}=0$ (\ref{lcconstr}) and $\ ^{H}%
\mathbf{T}^{+}=0,$ i.e. for holonomic configurations, the existence of $\
^{+}h$ determines explicitly a tower of hidden symmetries and states
uniquely (up to $(n+m)/2$ functions of one variable) the canonical form of
metric. Such holonomic principal CKY tensors and related symmetries are
studied in Refs. \cite{krt1,krt2}. For generic off--diagonal metrics, hidden
nonholonomic symmetries exist naturally and they are associated with induced
torsions. The constructions are similar to those presented in sections 2 and
3 of Ref. \cite{houri} for generalized closed CKY 2-forms with nontrivial
torsion and generated towers of hidden symmetries with that difference that
in our work we consider different types of torsions.

A nontrivial nonholonomic structure does not allow a simple recovering of
symmetries from the divergence of $\ ^{+}h$ (contrary to the ''holonomic''
Kerr solution). Following the conditions of above Corollary, we can
introduce such nonholonomic values
\begin{eqnarray*}
\ ^{H}\mathbf{\xi }_{+} &=&- \frac{1}{3}\ ^{H}\mathbf{d}^{\ast }(\
^{+}h)=\rho ^{-1} ( -\sqrt{|\ ^{0}S|}\overline{\mathbf{e}}^{4}+a\sin \theta
\ \overline{\mathbf{e}}^{3}), \\
\ ^{H}\mathbf{\xi }_{+}^{\natural } &=& ( \sqrt{|\eta _{4}|}\ ^{\circ
}h_{4})^{-1}\partial _{t}.
\end{eqnarray*}
For holonomic configurations, there is a smooth limit $\eta _{4}\rightarrow
0 $ to data (\ref{data3}) with $b=0$ in (\ref{auxm1}), when both torsions $%
\widehat{\mathcal{T}}$ and $\ ^{H}\mathbf{T}$ vanish and we recover the
standard Kerr geometry.

There is another closed nonholonomic CKY 2--form associated to nonholonomic
and holonomic Kerr--Sen type geometries
\begin{eqnarray*}
\ ^{-}h &=&a\cos \theta \ \overline{e}^{2}\wedge \ \overline{\mathbf{e}}%
^{3}-r\ \overline{e}^{1}\wedge \ \overline{\mathbf{e}}^{4}, \\
\ ^{H}\mathbf{\xi }_{-} &\mathbf{=}&\mathbf{-}\frac{1}{3}\ ^{H}\mathbf{d}%
^{\ast }(\ ^{-}h)=-\rho ^{-1}\left( \sqrt{|\ ^{0}S|}\overline{\mathbf{e}}%
^{4}+a\sin \theta \ \overline{\mathbf{e}}^{3}\right) ,
\end{eqnarray*}%
with respect to a different torsion $\ ^{H}\mathbf{T}^{-}:=\
^{\triangleright }T^{-}e^{234}+ \ ^{\triangleleft}T^{-}e^{124}$. In Ref.
\cite{houri}, see formulas (3.15) - (3.17), it is considered that such a
torsion is ''rather peculiar'' because it remains non--trivial even in the
holonomic Kerr geometry. In terms of the geometry of nonholonomic manifolds,
this is not surprising because rotating Kerr black holes can be described
naturally with respect to rotating system of coordinates. Such
geometric/physical objects have already a specific anisotropy determined by
rotations and the associated torsion $\ ^{H}\mathbf{T^{-}}$ are induced by
rotating frames of reference. In a more general context, the constructions
can be extended to arbitrary nonholonomic frames including those related to
N--connections and certain off--diagonal terms in the metrics.

Finally, in this section, we note the nontrivial nonholonomic structures
associated to off--diagonal solutions are characterized additionally by
various types of hidden nonholonomic symmetries with induced torsions which
play an important role in stating criteria of separability of
Hamilton--Jacobi, Klein--Gordon, Dirac equations etc, which nonholonomic
variables. Such constructions were provided, for instance, in Refs. \cite%
{houri}, for holonomic Kerr--Sen and Taub NUT backgrounds, and in Refs. \cite%
{vp,vt}, for nonholonomic Taub NUT solutions and Dirac operators. The length
of this paper does not allow analyze certain constructions related to
nonholonomic Kerr--Sen configurations.

\section{Nonholonomic Symmetries and Quantum Gravitational Ano\-ma\-lies}

\label{s5} In previous section, we proved that there are natural
nonholonomic hidden symmetries determined by non--diagonal components of
metrics which characterize different classes of generic off--diagonal
solutions of Einstein equations. Usual hidden symmetries, for holonomic
configurations, with the LC--connection, result in gravitational anomalies
(studied in various models of quantum gravity). The surprising thing is that
we can find such nonholonomic hidden symmetries when certain classes of
anomalies can be canceled (for some auxiliary, but not less fundamental,
connections); at the end, all constructions can be re--defined equivalently
in terms of the LC--configurations. The aim of this section is to show how
some types of quantum gravitational anomalies derived for nonholonomic
hidden symmetries can be eliminated.

We consider a physical system when possible non--integrable constraints are
encoded into the frame structure of a nonholonomic manifold $\mathbf{V,}\dim
\mathbf{V}=n+m$. To find the necessary conditions for the existence of
constants of, in general, constrained motion in a first-quantized system we
replace momenta by derivatives and look for operators commuting with the
Hamiltonian $\widehat{\mathcal{H}}$. In our approach, it is defined using
the canonical d--connection,
\begin{equation}
\widehat{\mathcal{H}}=\widehat{\square }=\widehat{\mathbf{D}}_{\beta }%
\mathbf{g}^{\beta \gamma }\widehat{\mathbf{D}}_{\gamma }=\widehat{\mathbf{D}}%
_{\beta }\widehat{\mathbf{D}}^{\beta }.  \label{KG}
\end{equation}%
For holonomic configurations, $\widehat{\mathcal{H}}$ transforms into the
Hamiltonian $\mathcal{H}$ defined by $\nabla $ and corresponds to a free
scalar particle when the covariant Laplacian (d'Alambert operator $\square
=\nabla _{\beta }\nabla ^{\beta },$ constructed for the same metric) is
acting on scalars.

In general, the classical conserved quantities associated with SK tensors do
not transfer to the quantized systems which results in quantum anomalies. In
what follows we shall analyze the quantum anomalies in the case of
nonholonomic SK and CSK d--tensors and, to make things more specific, we
confine ourselves to the case of tensors of rank $1$ and $2$.

Let us consider a conserved operator corresponding to a conformal Killing
d--vector $\mathbf{K}^{\alpha }$ in the quantized system $\ _{V}\widehat{%
\mathcal{Q}}=\mathbf{K}^{\alpha }\widehat{\mathbf{D}}_{\alpha }$. A quantum
gravitational anomaly can be identified by evaluating the commutator $[%
\widehat{\square },\widehat{\mathcal{Q}}_{V}]\Phi $ for the solutions of the
Klein-Gordon equation, $\Phi \in \mathcal{C}^{\infty }(\mathbf{V})$, with
the Klein-Gordon d--operator (\ref{KG}). A straightforward calculation with
respect to N--adapted frames is similar to that in usual (pseudo) Riemannian
spaces but using $\widehat{\mathbf{D}}$ instead of $\nabla$. It gives
\begin{equation}
\lbrack \widehat{\mathcal{H}},\ \ _{V}\widehat{\mathcal{Q}}]=\frac{2-n-m}{n+m%
}\left( \widehat{\mathbf{D}}^{\alpha }\widehat{\mathbf{D}}^{\gamma }\mathbf{K%
}_{\gamma }\right) \widehat{\mathbf{D}}_{\alpha }+\frac{2}{n+m}\widehat{%
\mathbf{D}}_{\alpha }\mathbf{K}_{~}^{\alpha }\widehat{\square }\,.
\label{commck}
\end{equation}

For a Killing d--vector $\mathbf{K}^{\alpha },$ this commutator vanishes and
there are no quantum gravitational anomalies (this property holds both for
the holonomic and/or nonholonomic systems). But for conformal Killing
d--vectors the \ situation is very different. Even for solutions of the
massless nonholonomic Klein--Gordon equation, $\widehat{\square }\Phi (x)=0$
(we can also consider $\square \Phi (x)=0$, when the zero torsion conditions
are satisfied), the term $\left( \widehat{\mathbf{D}}^{\alpha }\widehat{%
\mathbf{D}}^{\gamma }\mathbf{K}_{\gamma }\right) \widehat{\mathbf{D}}%
_{\alpha }$ from (\ref{commck}) survives and, in general, the system is
affected by quantum gravitational anomalies.

Using a tedious evaluation for N--adapted canonical operators (which is
similar to that for the LC configurations, provided in Refs. \cite%
{ivv,visin2}), we prove\footnote{%
arranging the right side into groups with three, two and just one
derivatives and consequently it is impossible to have compensations between
them}:

\begin{lemma}
a) For a quantity $\ _{T}\widehat{\mathcal{Q}}:=\widehat{\mathbf{D}}_{\alpha
}\mathbf{K}^{\alpha \beta }\widehat{\mathbf{D}}_{\beta }$, the commutator
{\small
\begin{eqnarray*}
\lbrack \widehat{\square },\ \ _{T}\widehat{\mathcal{Q}}] = 2 (\widehat{\mathbf{D}}^{(\gamma }\mathbf{K}^{\alpha \beta }) \widehat{%
\mathbf{D}}_{\gamma }\widehat{\mathbf{D}}_{\alpha }\widehat{\mathbf{D}}%
_{\beta }+ 3\widehat{\mathbf{D}}_{\mu } ( \widehat{\mathbf{D}}^{(\gamma }%
\mathbf{K}^{\mu \beta )}) \widehat{\mathbf{D}}_{\beta }\widehat{\mathbf{D}}%
_{\gamma }+  \\
\{-\frac{4}{3}\widehat{\mathbf{D}}_{\gamma }( \widehat{\mathbf{R}%
}_{\nu }^{~\ [\gamma }\mathbf{K}^{\beta ]\nu }) +\widehat{\mathbf{D}}_{\gamma } [\frac{1}{2}\mathbf{g}_{\mu \tau }(
\widehat{\mathbf{D}}^{\gamma }\widehat{\mathbf{D}}^{(\mu }\mathbf{K}^{\tau
\beta )}-\widehat{\mathbf{D}}^{\beta }\widehat{\mathbf{D}}^{(\mu }\mathbf{K}%
^{\gamma \tau )}) +\widehat{\mathbf{D}}_{\alpha }\widehat{\mathbf{D}}%
^{(\gamma }\mathbf{K}^{\alpha \beta )}] \}\widehat{\mathbf{D}}_{\beta }.
\end{eqnarray*}%
}
b) In the case of SK d--tensors all the symmetrized derivatives vanish and
\begin{equation}
\lbrack \widehat{\square },\ _{T}\widehat{\mathcal{Q}}]=-\frac{4}{3}\widehat{%
\mathbf{D}}^{\gamma }\left( \widehat{\mathbf{R}}_{\nu }^{~\ [\gamma }\mathbf{%
K}^{\beta ]\nu }\right) \widehat{\mathbf{D}}_{\beta }\,.  \label{comut}
\end{equation}
\end{lemma}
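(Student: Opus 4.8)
The plan is to reduce the statement to a direct computation in N--adapted frames, closely parallel to the Levi--Civita computation of Refs.~\cite{ivv,visin2} but performed consistently with the canonical d--connection $\widehat{\mathbf{D}}$ and its nonholonomically induced torsion $\widehat{\mathbf{T}}$. Throughout I would use the metric compatibility $\widehat{\mathbf{D}}\mathbf{g}=0$, which lets the inverse d--metric pass freely through every covariant derivative, together with the N--adapted commutation rule in which $[\widehat{\mathbf{D}}_{\alpha},\widehat{\mathbf{D}}_{\beta}]$ is expressed through the curvature d--tensor $\widehat{\mathbf{R}}_{\ \ \beta\gamma\delta}^{\alpha}$ (\ref{curv}) together with a torsion term built from $\widehat{\mathbf{T}}_{\ \alpha\beta}^{\gamma}$ (\ref{dtors}). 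The essential bookkeeping device is the one indicated in the footnote: after all derivatives are commuted to a canonical ordering, the result splits into three structurally independent blocks carrying three, two and one free derivatives acting on the test scalar $\Phi$, and no term from one block can cancel a term from another.

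For part~(a) I would act with $[\widehat{\square},\ _{T}\widehat{\mathcal{Q}}]$ on an arbitrary scalar $\Phi$ and expand both compositions $\widehat{\square}\,(\ _{T}\widehat{\mathcal{Q}}\,\Phi)$ and $\ _{T}\widehat{\mathcal{Q}}\,(\widehat{\square}\,\Phi)$ by the Leibniz rule. Collecting terms by derivative order, the fourth--order pieces cancel identically (the principal symbols of $\widehat{\square}$ and $\ _{T}\widehat{\mathcal{Q}}$ are commuting polynomials in the momenta), and the third--order remainder reorganizes, after symmetrizing the free indices, into $2(\widehat{\mathbf{D}}^{(\gamma}\mathbf{K}^{\alpha\beta)})\widehat{\mathbf{D}}_{\gamma}\widehat{\mathbf{D}}_{\alpha}\widehat{\mathbf{D}}_{\beta}$. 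Each commutation of adjacent derivatives trades a reduction in derivative order for a curvature (and torsion) insertion; doing this systematically produces the second--order block $3\widehat{\mathbf{D}}_{\mu}(\widehat{\mathbf{D}}^{(\gamma}\mathbf{K}^{\mu\beta)})\widehat{\mathbf{D}}_{\beta}\widehat{\mathbf{D}}_{\gamma}$ and, finally, the first--order block consisting of the curvature contraction $-\frac{4}{3}\widehat{\mathbf{D}}_{\gamma}(\widehat{\mathbf{R}}_{\nu}^{\ \ [\gamma}\mathbf{K}^{\beta]\nu})$ together with the bracketed combination of once--differentiated symmetrized second derivatives of $\mathbf{K}$. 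This reproduces the displayed formula of part~(a).

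For part~(b) I would impose the Stackel--Killing d--tensor condition in its N--adapted form $\widehat{\mathbf{D}}^{(\gamma}\mathbf{K}^{\alpha\beta)}=0$, i.e. the completely symmetrized covariant derivative of $\mathbf{K}$ with respect to $\widehat{\mathbf{D}}$ vanishes. This annihilates the third-- and second--order blocks at once, and it also kills every symmetrized--derivative contribution inside the first--order bracket, leaving only $[\widehat{\square},\ _{T}\widehat{\mathcal{Q}}]=-\frac{4}{3}\widehat{\mathbf{D}}^{\gamma}(\widehat{\mathbf{R}}_{\nu}^{\ \ [\gamma}\mathbf{K}^{\beta]\nu})\widehat{\mathbf{D}}_{\beta}$, which is precisely (\ref{comut}). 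Since the three blocks are independent, the surviving curvature term cannot be compensated by any further symmetrization, so the anomaly genuinely persists unless the curvature contraction itself vanishes.

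The step I expect to be the main obstacle is controlling the torsion $\widehat{\mathbf{T}}$ of the canonical d--connection while commuting derivatives. Unlike the Levi--Civita case, each reordering generates explicit torsion insertions in addition to curvature ones, so I would need to verify that these torsion contributions are already absorbed into the d--curvature $\widehat{\mathbf{R}}$ (which, by (\ref{curv}) and the coefficient formulas of the Appendix, is built from the same N--connection data that define $\widehat{\mathbf{T}}$) and into the symmetrized derivatives, so that the final first--order block reduces to a pure curvature contraction with the stated coefficient $-\frac{4}{3}$. Confirming that the torsion terms regroup in exactly this way---rather than leaving an extra torsion--dependent single--derivative remainder---is the delicate point; it is what makes the N--adapted computation genuinely ``similar to'' the holonomic one and justifies retaining $\widehat{\mathbf{R}}$ alone in (\ref{comut}).
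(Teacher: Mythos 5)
Your proposal follows essentially the same route as the paper, whose own ``proof'' consists only of the remark that the result is obtained by a tedious evaluation for N--adapted canonical operators analogous to the Levi--Civita computation of the cited references, organized (per the paper's footnote) into blocks with three, two and one free derivatives between which no compensation can occur. Your closing concern about the torsion insertions generated when reordering the $\widehat{\mathbf{D}}$'s is well placed and is not addressed by the paper either, so your sketch is at least as complete as the published argument.
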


Similar formulas for $\widehat{\mathbf{D}}\rightarrow \nabla $ and SK
tensors positively exhibit quantum anomalies, i. e. the classical
conservation law does not transfer to the quantum level. Even if we evaluate
the commutator for CSK tensors associated with CKY tensors and the LC
connection we do not obtain a cancelation of anomalies \cite{visin2}.
Therefore we are not able to identify any favorable circumstances on the CSK
tensors in order to achieve a conserved quantum operator. This problem seems
to exist for all hidden symmetries of gravitational models with, or not,
torsion.

Nevertheless, there are possibilities to consider such nonholonomic
distributions on a spacetime $\mathbf{V,}$ when the associated nonholonomic
commutator (\ref{comut}) computed for $\widehat{\mathbf{D}}$ vanishes.

\begin{theorem}
\label{th4.1}For any given data $\left( \mathbf{g},\nabla \right) $ and
prescribed N--connection structure $\mathbf{N,}$ we can construct
nonholonomic deformations to some chosen canonical data $\left( \ ^{\eta }%
\mathbf{g},\ ^{\eta }\mathbf{N},\ ^{\eta }\widehat{\mathbf{D}}\right) $ when
$[\ ^{\eta }\widehat{\square },\ _{T}^{\eta }\widehat{\mathcal{Q}}]=0.$ If $%
\mathbf{N}$ is fixed to solve the LC conditions (\ref{lcconstr}) (for a zero
nonholonomically induced canonical torsion), the nonholonomic gravitational
anomaly vanishes for the Einstein spaces, $[\ ^{\eta }\square ,\ \
_{T}^{\eta }\mathcal{Q}]=0.$
\end{theorem}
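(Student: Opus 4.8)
The plan is to reduce the whole question to the single term surviving in formula (\ref{comut}) and to annihilate it algebraically by deforming the given data into an Einstein space for the canonical d--connection. By part b) of the preceding Lemma, for a nonholonomic SK d--tensor $\mathbf{K}^{\alpha \beta }$ all symmetrized derivatives drop out and the only obstruction to anomaly cancellation is
\[
\lbrack \widehat{\square },\ _{T}\widehat{\mathcal{Q}}]=-\tfrac{4}{3}\widehat{\mathbf{D}}^{\gamma }\big(\widehat{\mathbf{R}}_{\nu }^{~[\gamma }\mathbf{K}^{\beta ]\nu }\big)\widehat{\mathbf{D}}_{\beta }\,,
\]
so it suffices to arrange that the bracket $\widehat{\mathbf{R}}_{\nu }^{~[\gamma }\mathbf{K}^{\beta ]\nu }$ vanish identically. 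The strategy is therefore: deform into an Einstein target, read off the Ricci d--tensor, and use the symmetry of $\mathbf{K}$.

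First I would invoke the anholonomic deformation method. Given $(\mathbf{g},\nabla )$ and a prescribed $\mathbf{N}$, I apply Theorem \ref{thgs} (through the polarizations (\ref{polf})) to construct target data $(\ ^{\eta }\mathbf{g},\ ^{\eta }\mathbf{N},\ ^{\eta }\widehat{\mathbf{D}})$ solving the N--adapted Einstein equations (\ref{cdeinst}) with constant diagonal source $\mathbf{\Upsilon }_{\ \delta }^{\alpha }\rightarrow \lambda \,\delta _{\ \delta }^{\alpha }$ (a generic off--diagonal Einstein space). The existence of such a deformation, for arbitrary input data, is exactly what Theorem \ref{thgs} delivers; this already establishes the constructive content of the first assertion.

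Second, I would exploit the Einstein condition. On the target the equations (\ref{cdeinst}) with $\mathbf{\Upsilon }_{\ \delta }^{\alpha }=\lambda \,\delta _{\ \delta }^{\alpha }$, after the trace reversal in $\dim \mathbf{V}=n+m$, force the mixed--index Ricci d--tensor of $\ ^{\eta }\widehat{\mathbf{D}}$ to be pure trace, $\ ^{\eta }\widehat{\mathbf{R}}_{\nu }^{~\gamma }=\widetilde{\lambda }\,\delta _{\nu }^{\gamma }$ with $\widetilde{\lambda }=\mathrm{const}$; note that, because the source is diagonal in the N--adapted frame, this proportionality to $\delta _{\nu }^{\gamma }$ holds despite the a priori non--symmetry of the Ricci d--tensor of a general d--connection. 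Substituting into (\ref{comut}) and using that a nonholonomic SK d--tensor is symmetric, $\mathbf{K}^{\beta \gamma }=\mathbf{K}^{\gamma \beta }$, gives
\[
\ ^{\eta }\widehat{\mathbf{R}}_{\nu }^{~[\gamma }\mathbf{K}^{\beta ]\nu }=\widetilde{\lambda }\,\delta _{\nu }^{[\gamma }\mathbf{K}^{\beta ]\nu }=\widetilde{\lambda }\,\mathbf{K}^{[\beta \gamma ]}=0\,,
\]
so the argument of the divergence is identically zero and $[\ ^{\eta }\widehat{\square },\ _{T}^{\eta }\widehat{\mathcal{Q}}]=0$.

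Finally, for the Levi--Civita statement I would restrict $\ ^{\eta }\mathbf{N}$ by the conditions (\ref{lcconstr}) (equivalently (\ref{lccond}) of Corollary \ref{corol1}). As in the argument proving the equivalence (\ref{cdeinst}), once these constraints hold the canonical d--torsion and the distortion tensor vanish, so that $\ ^{\eta }\widehat{\mathbf{D}}=\ ^{\eta }\nabla $ with respect to the N--adapted frames; repeating the computation (\ref{comut}) verbatim with $\widehat{\mathbf{D}}\rightarrow \nabla $ (the target being now an Einstein space in the usual sense, so its Ricci tensor is again proportional to $\delta _{\nu }^{\gamma }$) yields $[\ ^{\eta }\square ,\ _{T}^{\eta }\mathcal{Q}]=0$. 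I expect the main obstacle to be not this algebraic cancellation but guaranteeing that a nontrivial nonholonomic SK d--tensor $\mathbf{K}$ genuinely exists on the deformed target and remains symmetric and of Killing type for $\ ^{\eta }\widehat{\mathbf{D}}$: this is where the explicit CKY/Killing--tensor constructions of Section \ref{s4} must be transported to the deformed geometry, and where one must verify carefully that the N--adapted trace reversal really produces $\ ^{\eta }\widehat{\mathbf{R}}_{\nu }^{~\gamma }\propto \delta _{\nu }^{\gamma }$ for the chosen class of solutions.
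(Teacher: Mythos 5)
Your proposal is correct and follows essentially the same route as the paper: deform $(\mathbf{g},\nabla)$ via the polarizations into an off--diagonal Einstein space for $\ ^{\eta}\widehat{\mathbf{D}}$ using Theorem \ref{thgs}, observe that the Einstein condition (\ref{einstcosmsol}) makes the Ricci d--tensor pure trace so that the antisymmetrized term $\widehat{\mathbf{R}}_{\nu }^{~[\gamma }\mathbf{K}^{\beta ]\nu }$ in (\ref{comut}) vanishes against the symmetry of $\mathbf{K}$, and then impose the LC conditions to transfer the conclusion to $\nabla$. If anything, you make the key algebraic cancellation (and the caveat about the existence of a nontrivial SK d--tensor on the target) more explicit than the paper does.
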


\begin{proof}
In general, a metric $\mathbf{g}$ is not a solution of Einstein equations
for $\nabla .$ We fix a nonholonomic splitting with a \textquotedblright
primary\textquotedblright\ $\mathbf{N}$ consider nonholonomic deforms of
type (similarly to (\ref{nhdef}) and (\ref{polf}))
\begin{eqnarray}
\ \mathbf{g} &\mathbf{=}&[\ g_{i},\ h_{a},\ N_{i}^{a\,}]\rightarrow
\label{nhdef1} \\
\ ^{\eta }\mathbf{g} &=&[\ ^{\eta }g_{i}=\eta _{i}g_{i},\ ^{\eta }h_{a}=\eta
_{a}h_{a},\ \ ^{\eta }N_{i}^{3\,}=\eta _{i}^{3}w_{i},\ \ ^{\eta
}N_{i}^{4\,}=\eta _{i}^{4}n_{i}],  \notag
\end{eqnarray}%
where $\eta _{\alpha }=1+\chi _{\alpha }=(\eta _{i}=1+\chi _{i}(u^{\beta
}),\eta _{a}=1+\chi _{a}(u^{\beta }),\eta _{i}^{a}=\delta _{i}^{a}+\chi
_{i}^{a}(u^{\beta })$. We do not consider summation on repeating indices in
the above formulas. The gravitational polarizations $\eta _{\alpha }$ and $%
\eta _{i}^{a}$ are uniquely defined if we chose any $\ ^{\eta }\mathbf{g}$
defining an exact solution of
\begin{equation}
\ ^{\eta }\widehat{\mathbf{R}}_{\ \beta \delta }=\lambda \ ^{\eta }\mathbf{g}%
_{\ \beta \delta },  \label{einstcosmsol}
\end{equation}%
see equations (\ref{cdeinst}) for source (\ref{source}) defined by a
nontrivial cosmological constant $\lambda .$ The solutions $\ ^{\eta }%
\mathbf{g}$ can be always constructed following the conditions of Lemma \ref%
{lem} and Theorem \ref{thgs} and (for the LC conditions) Corollary \ref%
{corol1}.

For different classes of solutions $\ ^{\eta }\mathbf{g,}$ we have different
types of hidden nonholonomic symmetries determined by the primary data $%
\left( \mathbf{g},\nabla \right) $ and target data $\left(\ ^{\eta }\mathbf{g%
},\ ^{\eta }\mathbf{N},\ ^{\eta }\widehat{\mathbf{D}}\right)$. If the LC
conditions (\ref{lccond}) are satisfied and $\mathbf{g}$ is also a solution
of the Einstein equations, we can constrain the class of nonholonomic
transforms, i.e. the gravitational polarizations $\eta _{\alpha }$ and $\eta
_{i}^{a}$ in such a form that $\ ^{\eta }\mathbf{g}$ is generated via a
frame transform of $\mathbf{g}$. In both cases, for a chosen class of target
exact solutions, the $\eta $--coefficients characterize the ''flexibility'',
i.e. symmetry of with respect to possible nonholonomic deforms (\ref{nhdef1}%
). For explicit geometric constructions, we can fix the target metrics to be
certain classes of nonholonomic Kerr--Sen black hole configurations in GR
(for instance, with small ellipsoidal and/or solitonic deformations).

If the conditions (\ref{einstcosmsol}) are satisfied for the commutator (\ref%
{comut}), we get $[\ ^{\eta }\widehat{\square },\ _{T}^{\eta }\widehat{%
\mathcal{Q}}]=0,$ even (in general), computing for $\mathbf{g}$ we have $%
[\square ,\ \ _{T}\mathcal{Q}]\neq 0$. We conclude that we can always
eliminate such gravitational anomalies for certain classical symmetries of a
(pseudo) Riemannian metric $\mathbf{g}$ if we suppose that such a metric can
be nonholonomically deformed to a solution of the Einstein equations for a
canonical d--connection $\ ^{\eta }\widehat{\mathbf{D}}$ uniquely defined by
the same metric structure. Such nonholonomic hidden symmetries characterize
both the ''off--diagonal'' nonlinear properties of $\mathbf{g}$ and its
''flexibility'' to be transformed in Einstein spaces. $\square $
\end{proof}

\vskip3pt It is not possible to provide proofs of the above Lemma and
Theorem on nonholonomic cancelation of anomalies (or other results related
to nonholonomic hidden symmetries) \ for spaces endowed with general torsion
structure even there are various attempts and examples related to (super)
string theory \cite{houri,kub,strom,agricola,frolov}. For generic
off--diagonal configurations in GR, such constructions can be performed in a
simplified form if it is known how a result can be derived, for instance,
for a (non) vacuum solution $\ ^{\circ }\mathbf{g}$ and corresponding
LC--connection $\ ^{\circ }\nabla $ using certain vanishing commutators. Via
nonholonomic deformations, $\ ^{\circ }\mathbf{g\rightarrow }\ ^{\eta }%
\mathbf{g}$, we can derive necessary type canonical decomposition $\ ^{\circ
}\nabla \rightarrow \ ^{\eta }\widehat{\mathbf{D}}=\ ^{\eta }\nabla -\
^{\eta }\widehat{\mathbf{Z}}$ generalizing the "anomaly" formulas for $\
^{\eta }\widehat{\mathbf{D}}$ which (in general) contain non-trivial torsion
components.

\begin{remark}
If we select for $\ ^{\eta }\mathbf{g}$ such nonholonomic parametrizations
that the conditions (\ref{einstcosmsol}) and (\ref{lccond}) are satisfied,
we get $[\ ^{\eta }\widehat{\square },\ _{T}^{\eta }\widehat{\mathcal{Q}}]=0$%
, i. e. canceling of certain anomalies for $\ ^{\eta }\widehat{\mathbf{D}} $
$\ $and, for zero torsions, for $\ ^{\eta }\nabla .$ We can always define
such $\ ^{\eta }\widehat{\mathbf{D}}$ \ with nonholonomic hidden symmetries
even we begin with a primary metric $\ ^{\circ }\mathbf{g}$ \ for which $[\
\ ^{\circ }\square ,\ _{T}^{\circ }\mathcal{Q}]\neq 0.$ Computations for
nontrivial $\ ^{\eta }\widehat{\mathbf{T}}$ are much simple and similar to
those for $\ ^{\circ }\nabla ,$ with $\ ^{\circ }T=0,$ because the
distortion tensor $\ ^{\eta }\widehat{\mathbf{Z}}$ from (\ref{distorsrel})
is completely determined by the metric tensor and finally constrained to be
zero.
\end{remark}

We emphasize this important consequence of the above Theorem: We can always
cancel anomalies of a Klein--Gordon operator in GR if we chose a
corresponding \ class of nonholonomic deformations of the frame and linear
connection structures \ uniquely defined \ by the \ metric structure.
Finally, we recover the conditions for the LC configurations \ for some
specials subclasses of nonholonomic deformations.

\section{Conclusions and Discussion}

\label{s6}In this work we have studied a new class of hidden nonholonomic
symmetries of generic off--diagonal solutions in the general relativity
theory. Such Einstein spacetimes can be generated via nonholonomic frame
transforms and deformations of fundamental geometric objects (for instance,
of the linear connection structure and related differential operators,
Riemann and Ricci tensors etc) to certain gravitational configurations with 
 explicit decoupling of some generalized Einstein gravitational
field equations. This allows us to find solutions in very general forms when
the coefficients of off--diagonal metrics depend on some classes of
generating and integration functions. Imposing nonholonomic constraints on
such integral varieties, we select the torsionless Levi--Civita, LC,
configurations and generate solutions in the Einstein gravity theory (such
methods and various examples are provided in Refs. \cite%
{vex2,vex3,vp,vt,vsingl1}).

Via nonholonomic deformations, various types of torsion fields can be
induced on a (pseudo) Riemannian manifold. Such torsions are very different
from those, for instance, in the Einstein--Cartan and/or string gravity
theories because, in our approach, they are completely defined by the
off--diagonal coefficients of the metric tensor following certain well
defined geometric principles. We do not need additional field equations for
such nonholonomically induced torsion fields, as it is considered, for
instance, in \cite{houri}, and all constructions can be performed
equivalently to those for the LC connection.

The gravitational and matter field interactions in the Einstein gravity
theory and modifications have a generic nonlinear character. They are
characterized by various \ types of nonlinear symmetries and different
classes of nonholonomic constraints. An effective tool for study such
classical and quantum interactions is the formalism of Killing--Yano and
Stackel--Killing tensors and their higher order generalizations and
extensions to nontrivial torsion fields, off--diagonal metrics,
non--integrable constraints etc. Their generalized conformal and anholonomic
analogs are related to a multitude of theoretical and mathematical physics
issues such as classical integrability of systems together with their
quantization, supergravity, string theories, hidden symmetries in higher
dimensional black--holes spacetimes, etc.

The considerations in this article are totally classical. Nevertheless,
certain constructions emphasize the existence of important quantum effects because of
possible elimination of certain types of quantum anomalies by classical
nonholonomic transforms/deformations. Let us briefly discuss such issues:\
We found new classes of generic off--diagonal exact solutions in general relativity and modifications 
possessing hidden symmetries generated by corresponding Killing--Yano and
Stackel--Killing tensors. The classical conserved quantities are not
generally preserved when we pass to the quantum mechanical level, i.e.
certain anomalies may occurs. In this work, there are analyzed gravitational
anomalies determined by surviving (in general) of the second term in commutator (\ref%
{commck}) by evaluations  for the solutions of the
Klein--Gordon equation with nonholonomically deformed Hamiltonian (\ref{KG}). It should be emphasized here that we consider the concept of "quantum
gravitational anomaly" in a very restricted sense, i.e. for certain quantum
operators on classical curved spacetimes when we do not have a well defined
and generally accepted model of quantum gravity.  See, for instance, \cite%
{visin2} and, for a review of main concepts on such quantum anomalies related to the wave, Klein--Gordon and Dirac operators, the
section 6 of \cite{vis3}.  The physical importance of Theorem \ref{th4.1} is
that it concludes a procedure of nonholonomic deformations resulting in
vanishing of certain classes of (nonholonomic) gravitational anomalies (even
we are not able to identify any general favorable circumstances of CSK
tensors in GR, and modifications, in order to construct conserved quantum
operators). This allows us to transfer certain classical conserved
quantities on off--diagonal spacetimes to the quantum mechanical level if
necessary types of nonholonomic constraints are imposed on a corresponding
classical nonlinear gravitational dynamics.

Let us speculate on possible quantum effects of "nonholonomic elimination"
of anomalies. \ For the Kerr and Reisner--Nordstr\"{o}m curved--space
backgrounds, there is a relationship between the quantum conservation laws
and the corresponding quantum numbers associated to operators that commute
with the wave operator in a first--quantized field theory. If the Ricci
curvature vanishes for any such type exact solution, this implies the
existence of a Killing vector/tensor in the first-/ second--order cases and
a corresponding constant for theories in the classical limit. Surprisingly,
such a property can be preserved under certain classes of off--diagonal
nonholonomic deformations of solutions (as we proved in our work). Perhaps,
this is explained in some sense as straightforward consequences of something
more fundamental. In \cite{vex4}, we proved that such nonholonomic
deformations can be constructed in a general form depending on finite sets
of parameters inducing corresponding hierarchies of conservation laws which,
in their turn, result in bi--Hamilton and associated solitonic hierarchies
\cite{vsolit}. Such terms contribute efficiently in the Hamiltonian (\ref{KG}%
) determined by nonholonomic deforms of the Klein-Gordon d--operator and may
result in observable quantum effects and new types of conservation laws of
solitonic nature. In section \ref{sssolit}, we provided explicit examples of
solitonic deformations of exact solutions which via nonlinear and
linear connection coefficients distort  the d'Alambert
operator.

Another application is similar to that for the Runge--Lenz constants in the
nonrelativistic hydrogen atom problem. The hidden symmetries of generic
off--diagonal solutions involve certain generalized types of
conserved quantities. If in the "hydrogen atom problem" the Runge--Lenz
constants and conservation law are additional to the well known for the
axial angular momentum, energy, and proper mass, their nonholonomic
deformation generalizations arising with hidden ellipsoid, solitonic,
parametric or other symmetries also may give rise to the degeneracy of
energy levels. We can prescribe a kind of Geroch multi-parametric group for anholonomic deformations  and
associated solitonic parameters \cite{vex4} for certain classes of solutions but, in
general, only the existence of certain hidden symmetries can hardly provide
a satisfactory explanation in itself. It  connects  certain mysteries to
some possible prescribed symmetries. Perhaps, this  provides a generalization
of the "Schiff conjecture" concerning the relationship between classical
mechanics, with nonholonomic constraints, generalizations to nonholonomic
frames in gravity and the equivalence principle, see details in \cite{BC1}.

It is an interesting question how the fact of canceling gravitational
anomalies for certain hidden nonholonomic symmetries demonstrated in this
work can be applied in modern quantum gravity. There are certain
quantization schemes based on introducing alternative connections in the
Einstein gravity theory and generalizations (for instance, the almost K\"{a}%
hler Cartan one, which is also completely defined by the metric structure in
a compatible form), see Refs. \cite{vdq3,vabr,vtwocon}. How such various
bi--connection formulations of a classical and quantum gravitational model
can be related to new insights on nonholonomic canceling of anomalies and
renormalization problem of gravitational interactions is a matter of further
research and elaborating new geometric methods.

Finally, it should be emphasized that an obvious extension of the results of
this paper can be performed for study hidden symmetries for nonholonomic
generalizations of particle like gravitational solutions and
Einstein--Yang--Mills--Higgs systems considered in Refs. \cite{br1,br2,blaga}%
. We should apply the nonlinear connection formalism and the results on
generic off--diagonal Taub NUT configurations, running black holes and
generalized solutions for the Einstein--Dirac solitonic and pp--waves \cite%
{vsingl1,vp,vt}. We plan to address such issues in our further works.

\vskip5pt

\textbf{Acknowledgments:\ } The work is partially supported by the Program
IDEI, PN-II-ID-PCE-2011-3-0256. Author is grateful to N. Mavromatos, P.
Stavrinos and M. Vi\c{s}inescu for important discussions and kind support.

\appendix

\setcounter{equation}{0} \renewcommand{\theequation}
{A.\arabic{equation}} \setcounter{subsection}{0}
\renewcommand{\thesubsection}
{A.\arabic{subsection}}

\section{N--adapted Coefficients for d--Connections}

For convenience, we present some necessary formulas from the geometry of
N--anholonomic (pseudo) Riemannian spaces, see details in \cite{vex3}.

The coefficients of the Levi--Civita (LC) connection $\nabla =\{\
_{\shortmid }\Gamma _{\ \alpha \beta }^{\gamma }\}$ for the metric (\ref{dm}%
) (computed with respect to N--adapted basis (\ref{nader}) and (\ref{nadif}%
)) can be written in the form
\begin{equation}
\ _{\shortmid }\Gamma _{\ \alpha \beta }^{\gamma }=\widehat{\mathbf{\Gamma }}%
_{\ \alpha \beta }^{\gamma }+Z_{\ \alpha \beta }^{\gamma }.  \label{deflc}
\end{equation}%
The value $\widehat{\mathbf{D}}=\{\widehat{\mathbf{\Gamma }}_{\ \alpha \beta
}^{\gamma }=(\widehat{L}_{jk}^{i},\widehat{L}_{bk}^{a},\widehat{C}_{jc}^{i},%
\widehat{C}_{bc}^{a})\}$ with coefficients {\small
\begin{eqnarray}
\widehat{C}_{jc}^{i} &=&\frac{1}{2}g^{ik}e_{c}g_{jk},\ \widehat{L}_{bk}^{a}
= e_{b}(N_{k}^{a})+\frac{1}{2}h^{ac} (e_{k}h_{bc}-h_{dc}\
e_{b}N_{k}^{d}-h_{db}\ e_{c}N_{k}^{d}) ,  \notag \\
\widehat{L}_{jk}^{i} &=&\frac{1}{2}g^{ir}
(e_{k}g_{jr}+e_{j}g_{kr}-e_{r}g_{jk}), \widehat{C}_{bc}^{a}= \frac{1}{2}%
h^{ad} (e_{c}h_{bd}+e_{c}h_{cd}-e_{d}h_{bc})  \label{candcon}
\end{eqnarray}%
} defines the canonical distinguished connection (d--connection). By
straightforward computations, we can check that it is metric compatible, $%
\widehat{\mathbf{D}}\mathbf{g}=0,$ and its torsion $\mathcal{T}=\{\widehat{%
\mathbf{T}}_{\ \alpha \beta }^{\gamma }\equiv \widehat{\mathbf{\Gamma }}_{\
\alpha \beta }^{\gamma }-\widehat{\mathbf{\Gamma }}_{\ \beta \alpha
}^{\gamma };\widehat{T}_{\ jk}^{i},\widehat{T}_{\ ja}^{i},\widehat{T}_{\
ji}^{a},\widehat{T}_{\ bi}^{a},\widehat{T}_{\ bc}^{a}\},$ is with zero
horizontal and vertical coefficients, $\widehat{T}_{\ jk}^{i}=0$ and $%
\widehat{T}_{\ bc}^{a}=0.$ There are also nontrivial h--v-- coefficients
\begin{eqnarray}
\widehat{T}_{\ jk}^{i} &=&\widehat{L}_{jk}^{i}-\widehat{L}_{kj}^{i},\widehat{%
T}_{\ ja}^{i}=\widehat{C}_{jb}^{i},\widehat{T}_{\ ji}^{a}=-\Omega _{\
ji}^{a},  \label{dtors} \\
\widehat{T}_{aj}^{c} &=&\widehat{L}_{aj}^{c}-e_{a}(N_{j}^{c}),\widehat{T}_{\
bc}^{a}=\ \widehat{C}_{bc}^{a}-\ \widehat{C}_{cb}^{a}.  \notag
\end{eqnarray}

The distortion tensor $Z_{\ \alpha \beta }^{\gamma }$ in (\ref{deflc}) is
also constructed in a unique form from the coefficients of metric
N--connection, {\small
\begin{eqnarray}
\ Z_{jk}^{i} &=&0,\ Z_{jk}^{a}=-\widehat{C}_{jb}^{i}g_{ik}h^{ab}-\frac{1}{2}%
\Omega _{jk}^{a},~Z_{bk}^{i}=\frac{1}{2}\Omega _{jk}^{c}h_{cb}g^{ji}-\Xi
_{jk}^{ih}~\widehat{C}_{hb}^{j},  \notag \\
Z_{bk}^{a} &=&\ ^{+}\Xi _{cd}^{ab}~\widehat{T}_{kb}^{c},\ Z_{kb}^{i}=\frac{1%
}{2}\Omega _{jk}^{a}h_{cb}g^{ji}+\Xi _{jk}^{ih}~\widehat{C}_{hb}^{j},\
Z_{bc}^{a}=0,\   \notag \\
Z_{jb}^{a} &=&-\ ^{-}\Xi _{cb}^{ad}~\widehat{T}_{jd}^{c},\ Z_{ab}^{i}=-\frac{%
g^{ij}}{2}\left[ \widehat{T}_{ja}^{c}h_{cb}+\widehat{T}_{jb}^{c}h_{ca}\right]
,  \label{deft}
\end{eqnarray}%
for $\ \Xi _{jk}^{ih}=\frac{1}{2}(\delta _{j}^{i}\delta
_{k}^{h}-g_{jk}g^{ih})$ and $~^{\pm }\Xi _{cd}^{ab}=\frac{1}{2}(\delta
_{c}^{a}\delta _{d}^{b}+h_{cd}h^{ab}).$ }

Any geometric and physical formulas for the connection $\nabla $ can be
equivalently redefined for the canonical d--connection $\widehat{\mathbf{D}}%
, $ and inversely, using (\ref{deflc}) because all involved geometric
objects (two different connections and the distorting tensor) are uniquely
defined by the same metric structure.

\setcounter{equation}{0} \renewcommand{\theequation}
{B.\arabic{equation}} \setcounter{subsection}{0}
\renewcommand{\thesubsection}
{B.\arabic{subsection}}

\section{Decoupling and Integration of Einstein Eqs}

\label{asec2}We briefly summarize the results on generating off--diagonal
solutions in gravity \cite{vex2,vex3}.

Using the conditions of Lemma \ref{lem} and computing in explicit form the
Ricci and Einstein tensors, we prove:

\begin{theorem}
(\textbf{decoupling of equations}). The Einstein equations for $\widehat{%
\mathbf{D}}$ (\ref{candcon}) and ansatz for the metric $\mathbf{g}$ (\ref%
{ans1}) with $\omega =1$ and any general source $\mathbf{\Upsilon }_{\
\delta }^{\alpha }$ (\ref{source}) are {\small
\begin{eqnarray}
&&\widehat{R}_{1}^{1}=\widehat{R}_{2}^{2}=-\frac{1}{2g_{1}g_{2}}%
[g_{2}^{\bullet \bullet }-\frac{g_{1}^{\bullet }g_{2}^{\bullet }}{2g_{1}}-%
\frac{\left( g_{2}^{\bullet }\right) ^{2}}{2g_{2}}+g_{1}^{\prime \prime }-%
\frac{g_{1}^{\prime }g_{2}^{\prime }}{2g_{2}}-\frac{\left( g_{1}^{\prime
}\right) ^{2}}{2g_{1}}]=-\Upsilon _{2}(x^{k}),  \notag \\
&&\widehat{R}_{3}^{3}=\widehat{R}_{4}^{4}=-\frac{1}{2h_{3}h_{4}}[h_{4}^{\ast
\ast }-\frac{\left( h_{4}^{\ast }\right) ^{2}}{2h_{4}}-\frac{h_{3}^{\ast
}h_{4}^{\ast }}{2h_{3}}]=-\Upsilon _{4}(x^{k},y^{3}),  \label{eqe} \\
&&\widehat{R}_{3k}=\frac{w_{k}}{2h_{4}}[h_{4}^{\ast \ast }-\frac{%
(h_{4}^{\ast })^{2}}{2h_{4}}-\frac{h_{3}^{\ast }h_{4}^{\ast }}{2h_{3}}]+%
\frac{h_{4}^{\ast }}{4h_{4}}(\frac{\partial _{k}h_{3}}{h_{3}}+\frac{\partial
_{k}h_{4}}{h_{4}})-\frac{\partial _{k}h_{4}^{\ast }}{2h_{4}}=0  \notag \\
&&\widehat{R}_{4k}=\frac{h_{4}}{2h_{3}}n_{k}^{\ast \ast }+\left( \frac{h_{4}%
}{h_{3}}h_{3}^{\ast }-\frac{3}{2}h_{4}^{\ast }\right) \frac{n_{k}^{\ast }}{%
2h_{3}}=0.  \notag
\end{eqnarray}%
}
\end{theorem}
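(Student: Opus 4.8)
The plan is to establish the identities by a direct N--adapted computation, since the canonical $\widehat{\mathbf D}$ is metric compatible with vanishing $h$-- and $v$--torsion, so that its curvature (\ref{curv}) and Ricci d--tensor (\ref{dricci}) split cleanly into a horizontal, a vertical, and a mixed block. First I would record the nonzero data for the ansatz (\ref{ans1}) with $\omega=1$: the diagonal blocks $g_{ij}=\mathrm{diag}(g_{1},g_{2})$ and $h_{ab}=\mathrm{diag}(h_{3},h_{4})$, with inverses $g^{ij}=\mathrm{diag}(1/g_{1},1/g_{2})$ and $h^{ab}=\mathrm{diag}(1/h_{3},1/h_{4})$, together with $N_{i}^{3}=w_{i}$ and $N_{i}^{4}=n_{i}$. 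The key simplification is that for $\omega=1$ all metric functions are independent of $y^{4}$, so every ${}^{\diamond}$--derivative drops out and the only vertical derivative is ${}^{\ast}=\partial_{v}$. Substituting into (\ref{candcon}) then collapses most contractions to single terms: $\widehat{C}_{jc}^{i}=\tfrac12 g^{ik}e_{c}g_{jk}=0$ because $g_{jk}$ depends only on $x^{k}$, while $\widehat{L}_{jk}^{i}$ reduces to the two--dimensional Christoffel symbols built from $g_{1},g_{2}$ and their $x^{k}$--derivatives, $\widehat{C}_{bc}^{a}$ to combinations of $h_{3},h_{4}$ and their ${}^{\ast}$--derivatives, and $\widehat{L}_{bk}^{a}$ to the terms carrying the N--coefficients $w_{i},n_{i}$.

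Next I would assemble the three sectors of (\ref{dricci}). The horizontal block $R_{ij}=R^{k}_{\ ijk}$ is built only from $\widehat{L}_{jk}^{i}$; because the corresponding $2\times2$ metric depends on $x^{k}$ alone, it is an effectively two--dimensional curvature, so $R_{ij}$ is proportional to $g_{ij}$ --- this both kills the off--diagonal $R_{12}$ and forces $\widehat{R}^{1}_{1}=\widehat{R}^{2}_{2}$, collapsing to the single scalar in the first line of (\ref{eqe}). Symmetrically, the vertical block $R_{ab}=R^{c}_{\ abc}$ is built from $\widehat{C}_{bc}^{a}$ and the ${}^{\ast}$--derivatives, giving $\widehat{R}^{3}_{3}=\widehat{R}^{4}_{4}$ and the second line of (\ref{eqe}). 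Imposing the diagonal source (\ref{source}) in the mixed form $\widehat{R}^{\alpha}_{\ \delta}-\tfrac12\delta^{\alpha}_{\delta}\,{}^{s}R=\Upsilon^{\alpha}_{\ \delta}$ and using these two coincidences, one has ${}^{s}R=2(\widehat{R}^{1}_{1}+\widehat{R}^{3}_{3})$, so the four diagonal equations reduce to two independent ones in which the genuinely free Ricci components $\widehat{R}^{1}_{1}$ and $\widehat{R}^{3}_{3}$ are matched, respectively, to the $x^{k}$-- and $(x^{k},v)$--dependent entries of the source (the apparent crossing of labels being absorbed by the trace subtraction).

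The hard part will be the mixed sector $R_{ai}=R^{b}_{\ aib}$ (together with $R_{ia}=-R^{k}_{\ ika}$), which produces $\widehat{R}_{3k}$ and $\widehat{R}_{4k}$. Here one must carry the N--connection through the curvature of a d--connection, retaining the canonical torsion pieces (\ref{dtors}) and the N--curvature $\Omega^{a}_{ij}$, and then collect all contributions containing $w_{k},w_{k}^{\ast}$, $n_{k}^{\ast},n_{k}^{\ast\ast}$ and the $\partial_{k}$ of $h_{3},h_{4}$. The delicate point is reproducing the exact coefficients $\tfrac{1}{2h_{4}}$, $\tfrac{h_{4}^{\ast}}{4h_{4}}$ and $\tfrac{1}{2h_{3}}$ of the last two lines of (\ref{eqe}), and in particular the structural fact that the very bracket $[\,h_{4}^{\ast\ast}-(h_{4}^{\ast})^{2}/2h_{4}-h_{3}^{\ast}h_{4}^{\ast}/2h_{3}\,]$ of the vertical equation reappears, multiplied by $w_{k}$, inside $\widehat{R}_{3k}$; it is precisely this repetition, together with the linearity of $\widehat{R}_{4k}=0$ in $n_{k}$, that yields the triangular (decoupled) structure exploited in the integration of Theorem~\ref{thgs}. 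Modulo this bookkeeping the component identities are those already recorded for the canonical d--connection in Ref.~\cite{vex3}, so the real content of the statement is the explicit contraction for the present diagonal ansatz and the verification that the result closes into the stated decoupled system.
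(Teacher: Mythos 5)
Your proposal follows essentially the same route as the paper, which likewise establishes the theorem by direct N--adapted computation of the canonical d--connection coefficients (\ref{candcon}) and the Ricci d--tensor for the ansatz (\ref{ans1}), deferring the component bookkeeping to the cited references. Your structural observations --- the effectively two--dimensional $h$-- and $v$--blocks forcing $\widehat{R}^{1}_{1}=\widehat{R}^{2}_{2}$ and $\widehat{R}^{3}_{3}=\widehat{R}^{4}_{4}$, the trace subtraction swapping which source function multiplies which block, and the reappearance of the vertical bracket inside $\widehat{R}_{3k}$ together with the linearity of $\widehat{R}_{4k}$ in $n_{k}$ --- are exactly the points on which the decoupling rests.
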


The system of partial differential equations (\ref{eqe}) is with decoupling
of equations (the "splitting" of equations is used as an equivalent one; we
should not confuse this with the property of separation of variables). For
simplicity, we can consider a subclass of solutions when for the chosen
N--system of reference $h_{a}^{\ast }\neq 0.$\footnote{%
If $h_{3}^{\ast }=0,$ or $h_{4}^{\ast }=0,$ the solutions can be constructed
similarly (in certain cases, they can be transformed from one to another one
via frame/coordinate transforms).}

\begin{corollary}
The system of equations (\ref{eqe}) for $y^{3}=v,$ $g_{i}=\epsilon
_{i}e^{\psi (x^{k})}$ and $h_{a}^{\ast }\neq 0,$ $\Psi _{2}\neq 0,$ $\Psi
_{4}\neq 0,$ can be written equivalently in the form%
\begin{eqnarray}
\psi ^{\bullet \bullet }+\psi ^{\prime \prime } &=&2\Psi _{4}(x^{k})
\label{eqe1} \\
h_{4}^{\ast } &=&2h_{3}h_{4}\Psi _{2}(x^{k},v)/\phi ^{\ast }, \ \beta
w_{i}+\alpha _{i} = 0,\ n_{i}^{\ast \ast }+\gamma n_{i}^{\ast } = 0,  \notag
\end{eqnarray}%
\begin{equation*}
\mbox{ for }\phi =\ln | \frac{h_{4}^{\ast }}{\sqrt{|h_{3}h_{4}|}}| ,\ \gamma
:=(\ln \frac{|h_{4}|^{3/2}}{|h_{3}|}) ^{\ast }, \ \alpha _{i}=h_{4}^{\ast
}\partial _{i}\phi ,\ \beta =h_{4}^{\ast }\phi ^{\ast }.
\end{equation*}
\end{corollary}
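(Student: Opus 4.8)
The plan is to substitute the restricted ansatz coming from Theorem \ref{thgs} (namely $g_i=\epsilon_i e^{\psi(x^k)}$, $y^3=v$, with $h_a^\ast\neq 0$) directly into the four decoupled equations (\ref{eqe}) and to reorganize each of them using the potential $\phi$ and the auxiliary quantities $\gamma,\alpha_i,\beta$ introduced in the statement. The whole argument is a chain-rule computation on top of the decoupling theorem; I would treat the four equations in turn. First, for the horizontal block $\widehat{R}_1^1=\widehat{R}_2^2$ I would insert $g_1=\epsilon_1 e^\psi$, $g_2=\epsilon_2 e^\psi$ and use $g_i^\bullet=g_i\psi^\bullet$, $g_i'=g_i\psi'$, so that $g_2^{\bullet\bullet}=g_2[(\psi^\bullet)^2+\psi^{\bullet\bullet}]$ and $g_1''=g_1[(\psi')^2+\psi'']$. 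The point to verify is that the quadratic first-derivative terms cancel pairwise: $-\frac{g_1^\bullet g_2^\bullet}{2g_1}-\frac{(g_2^\bullet)^2}{2g_2}$ exactly removes the $g_2(\psi^\bullet)^2$ contribution, and likewise for the primed terms. What survives is the linear combination $g_2\psi^{\bullet\bullet}+g_1\psi''$; dividing by $-2g_1g_2$ produces the (signed) two-dimensional Laplace operator and yields the first line of (\ref{eqe1}), $\psi^{\bullet\bullet}+\psi''=2\Psi_4$, in agreement with (\ref{data2}).

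The key observation, and the one genuinely delicate step, concerns the vertical block. I would show that the nonlinear second-order operator in the second equation of (\ref{eqe}) is an exact $\ast$-derivative once the logarithmic generating function $\phi=\ln|h_4^\ast/\sqrt{|h_3 h_4|}|$ is introduced. Writing $\phi=\ln|h_4^\ast|-\frac{1}{2}\ln|h_3|-\frac{1}{2}\ln|h_4|$, differentiating in $y^3$ and multiplying by $h_4^\ast$ gives $h_4^\ast\phi^\ast=h_4^{\ast\ast}-\frac{(h_4^\ast)^2}{2h_4}-\frac{h_3^\ast h_4^\ast}{2h_3}$, i.e. precisely the bracket appearing in $\widehat{R}_3^3$. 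This collapses the second-order equation to the first-order relation $h_4^\ast=2h_3 h_4\Psi_2/\phi^\ast$. The same bracket, now abbreviated $\beta=h_4^\ast\phi^\ast$, reappears in $\widehat{R}_{3k}$, so I would substitute it there and then recognize, via the analogous identity with $\partial_k$ in place of $\ast$, that the remaining terms reassemble into $\alpha_k=h_4^\ast\partial_k\phi$. Multiplying $\widehat{R}_{3k}=0$ through by $2h_4$ then gives the algebraic equation $\beta w_i+\alpha_i=0$.

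Finally, for $\widehat{R}_{4k}=0$ I would multiply through by $2h_3/h_4$, which turns the coefficient of $n_k^\ast$ into $\frac{h_3^\ast}{h_3}-\frac{3}{2}\frac{h_4^\ast}{h_4}$; this is exactly $\gamma=(\ln\frac{|h_4|^{3/2}}{|h_3|})^\ast$ up to sign, leaving the linear ODE $n_i^{\ast\ast}+\gamma n_i^\ast=0$. I expect the main conceptual obstacle to be the vertical step: one has to spot the logarithmic substitution $\phi$ that linearizes the nonlinear operator, after which the three remaining equations reduce to the chain rule and straightforward algebra. The only residual effort is bookkeeping of the signs $\epsilon_i=\pm 1$ and of the orientation conventions hidden in the definitions of $\phi,\gamma,\alpha_i,\beta$, needed to match the overall factors of $2$, the sign of each term, and the index relabeling of the source components between (\ref{eqe}) and (\ref{eqe1}).
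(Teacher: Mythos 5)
Your proposal is correct and follows exactly the (unwritten) derivation the paper intends: direct substitution of $g_{i}=\epsilon _{i}e^{\psi }$ into the $h$--block and the observation that $h_{4}^{\ast }\phi ^{\ast }=h_{4}^{\ast \ast }-\tfrac{(h_{4}^{\ast })^{2}}{2h_{4}}-\tfrac{h_{3}^{\ast }h_{4}^{\ast }}{2h_{3}}$ (and its $\partial _{k}$ analogue) collapses the three $v$--equations of (\ref{eqe}) to the stated first--order and linear forms. The only mismatches your computation would surface --- a strict reading gives $\beta w_{k}-\alpha _{k}=0$ and $n_{k}^{\ast \ast }-\gamma n_{k}^{\ast }=0$ with the definitions of $\alpha_i,\gamma$ as printed, and the source labels $\Psi_2,\Psi_4$ are swapped relative to $\Upsilon_2,\Upsilon_4$ in (\ref{eqe}) --- are sign/naming inconsistencies internal to the paper (compare (\ref{data1}) with Remark \ref{remarka}), which you correctly flag as convention bookkeeping rather than gaps in the argument.
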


The systems of equations (\ref{eqe}) and (\ref{eqe1}) can be integrated in
general forms following the results of Theorem \ref{thgs}; we can generate
solutions for the LC connection if the zero torsion conditions of Corollary %
\ref{corol1} are satisfied.

Let us study the ''vanishing torsion'' conditions (\ref{lccond}). For
general sources, $\mathbf{\Upsilon }_{\ \delta }^{\alpha },$ it is quite
difficult to prove in an explicit analytic form that such equations have
nontrivial solutions.

\begin{corollary}
\label{corolap1}We can adapt the nonholonomic distributions for generic
off--diagonal Einstein spaces with $\mathbf{\Upsilon }_{\ \delta }^{\alpha
}=\lambda \mathbf{\delta }_{\ \delta }^{\alpha },$ when $\Psi _{2}=\Psi
_{4}=\lambda ,$ and parametrize the data (\ref{data1}) and (\ref{data2}) for
the coefficients of metric ansatz in such a form that (\ref{lccond})
determine some classes of nontrivial solutions, when the d--torsions (\ref%
{dtors}) for $\widehat{\mathbf{D}}$ are zero.
\end{corollary}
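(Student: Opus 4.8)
The plan is to start from the general solution of Theorem \ref{thgs}, specialize the source to $\Upsilon_2 = \Upsilon_4 = \lambda$, and then exhibit explicit generating and integration functions for which the four LC-conditions (\ref{lccond}) hold while the off-diagonal $N$-coefficients stay nontrivial. For a cosmological-constant source the $v$-integral in the formula for $h_4$ in (\ref{data1}) is elementary, giving $h_4 = \underline{h}_4(x^k) \pm \lambda^{-1}\exp[2\phi]$, so that both $h_4$ and $w_i = \partial_i\phi/\phi^*$ are controlled by the single generating function $\phi(x^k,v)$ together with the integration datum $\underline{h}_4(x^k)$. Throughout I take $\omega = \mathrm{const}$, so that the second line of (\ref{data2}) holds trivially, while $\psi(x^k)$ is fixed independently by its own $2$-dimensional Poisson equation in (\ref{data2}); neither object enters the constraints (\ref{lccond}).

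First I would dispose of the $n$-sector. Choosing the integration function ${}^{2}n_k \equiv 0$ collapses the last line of (\ref{data1}) to $n_k = {}^{1}n_k(x^i)$, which is $v$-independent and hence satisfies $n_i^* = 0$ automatically; imposing in addition ${}^{1}n_k = \partial_k\tilde n(x^i)$ for a scalar potential $\tilde n$ yields $\partial_i n_k = \partial_k n_i$. Thus the two $n$-conditions in (\ref{lccond}) reduce to the trivial requirement that $n_k$ be a $v$-independent gradient on the $h$-base, which is always solvable and compatible with $n_k \neq 0$.

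The substantive work is the $w$-sector. I would first check that the symmetry condition $\mathbf{e}_k w_i = \mathbf{e}_i w_k$ is an identity for any $w_i = \partial_i\phi/\phi^*$: expanding $\mathbf{e}_k w_i = \partial_k w_i - w_k w_i^*$ and using that $\partial_i$, $\partial_k$ and $(\,\cdot\,)^*$ commute, the terms containing $\phi^{**}$ and the mixed products cancel pairwise, so this condition imposes nothing beyond the generating form already built into Theorem \ref{thgs}. It then remains to satisfy $w_i^* = \mathbf{e}_i\ln|h_4|$. Inserting $h_4 = \underline{h}_4 \pm \lambda^{-1}e^{2\phi}$ and $w_i = \partial_i\phi/\phi^*$ into $\mathbf{e}_i\ln|h_4| = \partial_i\ln|h_4| - w_i(\ln|h_4|)^*$, the two terms proportional to $\partial_i\phi\,(h_4 - \underline{h}_4)/h_4$ cancel and the whole expression collapses to $\mathbf{e}_i\ln|h_4| = \partial_i\underline{h}_4/h_4$. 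Hence the last LC-condition is equivalent to the transparent relation $w_i^* = \partial_i\underline{h}_4/h_4$.

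Finally, I would exhibit a nontrivial solution of this reduced equation. Taking $\underline{h}_4 = \mathrm{const}$ annihilates the right-hand side, so it suffices to choose $\phi$ with $w_i^* = 0$; the ansatz $\phi(x^k,v) = \phi_1(x^k) + c\,v$ with $c = \mathrm{const}\neq 0$ gives $\phi^* = c$, whence $w_i = c^{-1}\partial_i\phi_1$ is $v$-independent, $w_i^* = 0$, and the configuration is genuinely off-diagonal as soon as $\phi_1$ is non-constant (note $\phi^* = c \neq 0$ meets the generating-function hypothesis of Theorem \ref{thgs}, and $h_4^* = \pm 2c\lambda^{-1}e^{2\phi}\neq 0$). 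By Corollary \ref{corol1}, once (\ref{lccond}) holds the canonical d-torsion (\ref{dtors}) of $\widehat{\mathbf{D}}$ vanishes and the metric solves the Einstein equations for $\nabla$, which proves the claim. I expect the main obstacle to be precisely the algebraic reduction of $\mathbf{e}_i\ln|h_4|$ to $\partial_i\underline{h}_4/h_4$ — the cancellation that removes the $\phi$-dependence — since before it the first $w$-condition reads as a coupled nonlinear system in $\phi$, whereas after it a one-line choice of $\phi$ settles everything.
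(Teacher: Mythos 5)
Your argument is correct and follows essentially the same route as the paper: both reduce the LC--conditions (\ref{lccond}) to the single requirement $w_i^{\ast}=0$ on the generating function and then exhibit an explicit $\phi$ realizing it. The only differences are cosmetic --- the paper fixes $\underline{h}_4=0$ at the outset and uses a multiplicative separation $\phi={}^0\phi(x^k)\overline{\phi}(v)$ with $\overline{\phi}^{\ast}=\iota^{-1}\overline{\phi}$, whereas you keep $\underline{h}_4$ general (obtaining the cleaner intermediate identity $\mathbf{e}_i\ln|h_4|=\partial_i\underline{h}_4/h_4$ and noting that $\mathbf{e}_kw_i=\mathbf{e}_iw_k$ is automatic for $w_i=\partial_i\phi/\phi^{\ast}$) before setting $\underline{h}_4$ constant and taking $\phi$ affine in $v$.
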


\begin{proof}
Let us consider a solution (\ref{data1}) and (\ref{data2}) when the
coordinate system and boundary conditions are fixed in the form that $\
\underline{h}_{4}(x^{k})=0,$ $\ ^{2}n_{k}(x^{i})=0$ and $\partial _{i}\
^{1}n_{j}(x^{k})=\partial _{j}\ ^{2}n_{i}(x^{k}).$ In such cases, we must
prove that \ $h_{4}=\ \pm \lambda ^{-1}e^{\phi }$ and $w_{i}= \partial
_{i}\phi /\phi ^{\ast }$ have for some classes of functions $\phi (x^{i},v),$
$\phi ^{\ast }\neq 0,$ certain nontrivial solutions of (\ref{lccond}), i.e.
\begin{equation}
w_{i}^{\ast }=\mathbf{e}_{i}\ln |\ h_{4}|\mbox{ and }\partial
_{i}w_{j}=\partial _{j}w_{i}.  \label{aux}
\end{equation}%
Expressing $h_{4}$ and $w_{i}$ explicitly via $\phi ,$ we get from (\ref{aux}%
) that $\phi ^{\ast \ast }\partial _{i}\phi -\phi ^{\ast }(\partial _{i}\phi
)^{\ast }=0$. As a particular case, these equations can be written in the
form $ \left( \partial _{i}\phi /\phi ^{\ast }\right) ^{\ast }=w_{i}^{\ast
}=0,$ when $w_{i}=\ ^{0}w_{i}(x^{k}).$ Choosing $\phi =\ ^{0}\phi (x^{k})%
\overline{\phi (}v),$ we generate solutions with separation of variables, $\
^{0}w_{i}(x^{k})=\iota \partial _{i}\ ^{0}\phi $ and $\overline{\phi }^{\ast
}=\iota ^{-1}\overline{\phi },$ for a nonzero constant $\iota .$ In general,
\ we can consider various types of nonholonomic constraints (\ref{lccond})
for selecting from data (\ref{data1}) and (\ref{data2}) very different
families of solutions in GR. $\square $
\end{proof}

\begin{remark}
\label{remarka}Integrating on variable $y^{3}=\varphi $ and taking $%
\underline{h}_{4}=\ ^{0}h_{4}(\widetilde{x}^{i}),$ $\epsilon _{i}=1,$ $%
g_{i}=\eta _{i}e^{\psi (\widetilde{x}^{k})}\ ^{0}g_{i}(\widetilde{x}%
^{k}),\eta _{i}=\epsilon _{i}e^{\psi (\widetilde{x}^{k})},\ \omega =1,$ in
coordinates $\widetilde{u}^{\beta }$ (\ref{coordi}) and for the ''initial''
data $\ ^{\circ }\mathbf{g}_{\alpha }$ (\ref{datai}), the ''one--Killing''
off--diagonal solutions (\ref{data1}) and (\ref{data2}) for the Einstein
spaces can be represented in the form
\begin{eqnarray}
&&\epsilon _{i}\frac{\partial ^{2}}{(\partial \widetilde{x}^{i})^{2}}[\psi (%
\widehat{\Phi }+2\ln |\ ^{b}\rho |)]=\lambda ,  \label{data1b} \\
h_{4} &=&\ \ ^{\circ }h_{4}\pm \lambda ^{-1}e^{2\phi },\ h_{3}=\left[ \left(
\sqrt{|h_{4}|}\right) ^{\ast }\right] ^{2}e^{-2\phi },  \notag \\
w_{i} &=&-(\phi ^{\ast })^{-1}\frac{\partial \phi }{\partial \widetilde{x}%
^{i}},\ n_{k}=\ ^{1}n_{k}+\ ^{2}n_{k}\int d\varphi \ h_{3}/\left( \sqrt{%
|h_{4}|}\right) ^{3},  \notag
\end{eqnarray}%
for respectively given $\ ^{\circ }h_{4}(\widetilde{x}^{i}),\widehat{\Phi }(%
\widetilde{x}^{i}),\ ^{b}\rho (\widetilde{x}^{i})$ and chosen generating/
integration functions $\phi (\widetilde{x}^{i},\varphi ),$ $^{1}n_{k}(%
\widetilde{x}^{i}),\ ^{2}n_{k}(\widetilde{x}^{i})$ subjected to the
conditions (\ref{aux}).
\end{remark}

To consider possible small nonholonomic deformations $\ ^{\circ }\mathbf{g}%
_{\alpha }\rightarrow $ $\ ^{\eta }\mathbf{g}_{\alpha }$ is convenient to
express the formulas (\ref{aux}) via polarization functions (\ref{polf})
when, for $\lambda \neq 0,$ $h_{a}=\ ^{\circ }h_{a}\eta _{a}=\ ^{\circ
}h_{a}(1+\chi _{a})$. Using the second formula in (\ref{data1b}), for $%
h_{4}, $ we can express $\phi =\ln \sqrt{|\lambda \ ^{\circ }h_{4}\chi _{4}|}
$ and consider for such classes of solutions a ''new'' generating function $%
\chi _{4}(\widetilde{x}^{i},\varphi ).$ The third formula for $h_{3}$ from
that system of solutions allows us to express
\begin{equation}
\chi _{3}=-1+(\lambda \ ^{\circ }h_{3})^{-1}\left[ \left( \ln \sqrt{|1-\chi
_{4}|}\right) ^{\ast }\right] ^{2}.  \label{aux4a}
\end{equation}%
We conclude that all $v$--coefficients of off--diagonal metrics and
N--connections, see ansatz (\ref{ans1}) (equivalently (\ref{ans1a})) for the
Einstein spaces with Killing symmetry on $\partial /\partial \widetilde{y}%
^{4},$ up to arbitrary frame transforms, are functionally determined by $%
\chi _{4},$ i.e. $\phi \lbrack \chi _{4}],h_{a}$ $[\chi _{4}],$\ $w_{i}[\chi
_{4}],n_{k}[\chi _{4}].$

\end{document}